\def\colorful{1}
\newif\ifhyper\IfFileExists{hyperref.sty}{\hypertrue}{\hyperfalse}
\ifhyper\usepackage{hyperref}\fi
\def\colorful{0}
\def\nnewcolor{1}
\newcommand{\new}[1]{{\color{red} #1}}
\newcommand{\new}[1]{{#1}}
\newtheorem{theorem}{Theorem}[section]
\newtheorem{lemma}[theorem]{Lemma}
\newtheorem{proposition}[theorem]{Proposition}
\newtheorem{fact}[theorem]{Fact}
\theoremstyle{definition}
\newtheorem{definition}[theorem]{Definition}
\newcommand{\R}{\mathbb{R}}
\newcommand{\Z}{\mathbb{Z}}
\newcommand{\E}{\mathbb{E}}
\newcommand{\poly}{\mathrm{poly}}
\newcommand{\polylog}{\mathrm{polylog}}
\newcommand{\dtv}{d_{\mathrm TV}}
\newcommand{\ignore}[1]{}
\newcommand{\eps}{\epsilon}
\newcommand{\eqdef}{\stackrel{{\mathrm {\footnotesize def}}}{=}}
\newcommand{\littlesum}{\mathop{\textstyle \sum}}
\title{Near-Optimal Closeness Testing of Discrete Histogram Distributions}
\author{
Ilias Diakonikolas\thanks{Supported by NSF Award CCF-1652862 (CAREER) and a Sloan Research Fellowship.}\\
University of Southern California\\
{\tt diakonik@usc.edu}\\
\and
Daniel M. Kane\thanks{Supported by NSF Award CCF-1553288 (CAREER) and a Sloan Research Fellowship.}\\
University of California, San Diego\\
{\tt dakane@cs.ucsd.edu}\\
\and
Vladimir Nikishkin\thanks{Supported by a University of Edinburgh PCD Scholarship.}\\
University of Edinburgh\\
{\tt v.nikishkin@sms.ed.ac.uk}
}
\begin{document}

\maketitle

\thispagestyle{empty}

\begin{abstract}
We investigate the problem of testing the equivalence between two discrete histograms.
A {\em $k$-histogram} over $[n]$ is a probability distribution that is piecewise constant over some set of $k$ intervals over $[n]$.
Histograms have been extensively studied in computer science and statistics.
Given a set of samples from two $k$-histogram distributions $p, q$ over $[n]$,
we want to distinguish (with high probability) between the cases that $p = q$ and $\|p-q\|_1 \geq \eps$.
The main contribution of this paper is a new algorithm for this testing problem
and a nearly matching information-theoretic lower bound. 
Specifically, the sample complexity of our algorithm matches our lower bound up to a logarithmic factor, improving
on previous work by polynomial factors in the relevant parameters.
Our algorithmic approach applies in a more general setting and yields improved sample upper bounds 
for testing closeness of other structured distributions as well.
\end{abstract}


\section{Introduction}  \label{sec:intro}
In this work, we study the problem of testing equivalence (closeness) between two discrete {\em structured} distributions.
Let $\mathcal{D}$ be a family of univariate distributions over $[n]$ (or $\Z$).
The problem of {\em closeness testing for $\mathcal{D}$} is the following:
Given sample access to two unknown distribution $p, q \in \mathcal{D}$,
we want to distinguish between the case that $p = q$ versus $\|p-q\|_1 \ge \eps.$
(Here,  $\|p-q\|_1$ denotes the $\ell_1$-distance between the distributions $p, q$.)
The sample complexity of this problem depends on the underlying family
$\mathcal{D}$. 

For example, if $\cal D$ is the class
of {\em all} distributions over $[n],$ then it is known~\cite{CDVV14}
that the optimal sample complexity is $\Theta(\max\{ n^{2/3}/\eps^{4/3}, n^{1/2}/\eps^2\}).$
This sample bound is best possible only if the family $\mathcal{D}$ includes all possible
distributions over $[n]$, and we may be able to obtain significantly better upper bounds 
for most natural settings. For example,
if both $p, q$ are promised to be (approximately) log-concave over $[n]$, there is an algorithm
to test equivalence between them using $O(1/\eps^{9/4})$ samples~\cite{DKN:15:FOCS}.
This sample bound is independent of the support size $n$,
and is dramatically better than the worst-case tight bound~\cite{CDVV14} when $n$ is large.

More generally,~\cite{DKN:15:FOCS} described a framework to obtain
sample-efficient equivalence testers for various families of
structured distributions over both continuous and discrete domains.
While the results of~\cite{DKN:15:FOCS} are sample-optimal
for {\em some} families of distributions (in particular, over continuous domains),
it was not known whether they can be improved for natural families of discrete distributions.
In this paper, we work in the framework
of~\cite{DKN:15:FOCS} and obtain new nearly-matching algorithms and lower bounds. 

Before we state our results in full generality, we describe in detail 
a concrete application of our techniques to the case of {\em histograms} -- 
a well-studied family of structured discrete distributions with a plethora of applications.

\paragraph{Testing Closeness of Histograms.}
A {\em $k$-histogram} over $[n]$ is a probability distribution $p: [n] \to [0, 1]$
that is piecewise constant over some set of $k$ intervals over $[n]$.
The algorithmic difficulty in testing properties of such distributions lies in the fact
that the location and ``size'' of these intervals is a priori unknown.
Histograms have been extensively studied in statistics and computer science.
In the database community, histograms~\cite{JPK+98,CMN98,TGIK02,GGI+02, GKS06, ILR12, ADHLS15} 
constitute the most common tool for the succinct approximation of data.
In statistics, many methods have been proposed to estimate histogram distributions
~\cite{Scott79, FreedmanD1981, Scott:92, LN96, Devroye2004, WillettN07, Klem09} 
in a variety of settings. 

In recent years, histogram distributions have attracted renewed interested from the theoretical computer science community
in the context of learning~\cite{DDS12soda, CDSS13, CDSS14, CDSS14b, DHS15, AcharyaDLS16, ADLS17, DiakonikolasKS16a} and 
testing~\cite{ILR12, DDSVV13, DKN:15, Canonne16, CDGR16}. 
Here we study the following testing problem: Given sample access to two distributions $p, q$ over $[n]$
that are promised to be (approximately) $k$-histograms, distinguish between the cases that $p=q$ versus $\|p-q\|_1 \geq \eps$. 
As the main application of our techniques, we give a new testing algorithm and 
a nearly-matching information-theoretic lower bound for this problem.

We now provide a summary of previous work on this problem followed by
a description of our new upper and lower bounds.
We want to $\eps$-test closeness in $\ell_1$-distance between two $k$-histograms over $[n]$, where $k \leq n$.
Our goal is to understand the optimal sample complexity of this problem as a function of $k, n, 1/\eps$.
Previous work is summarized as follows:
\begin{itemize}
\item In~\cite{DKN:15:FOCS}, the authors gave a closeness tester 
with sample complexity $O(\max\{ k^{4/5}/\eps^{6/5}, k^{1/2}/\eps^2 \})$.

\item The best known sample lower bound is $\Omega(\max\{ k^{2/3}/\eps^{4/3}, k^{1/2}/\eps^2\})$.
This straightforwardly follows from~\cite{CDVV14}, since $k$-histograms
can simulate {\em any} support $k$ distribution.
\end{itemize}
Notably, none of the two bounds depends on the domain size $n$. 
Observe that the upper bound of $O(\max\{ k^{4/5}/\eps^{6/5}, k^{1/2}/\eps^2 \})$
cannot be tight for the entire range of parameters. For example, for $n = O(k)$, 
the algorithm of~\cite{ CDVV14} for testing closeness between arbitrary support $n$ 
distributions has sample size $O(\max\{ k^{2/3}/\eps^{4/3}, k^{1/2}/\eps^2\})$, matching the 
above sample complexity lower bound, up to a constant factor.

This simple example might suggest that the $\Omega(\max\{ k^{2/3}/\eps^{4/3}, k^{1/2}/\eps^2\})$ 
lower bound is tight in general. We prove that this is not the case. 
The main conceptual message of our new upper bound and nearly-matching lower bound 
is the following: 
\begin{quote}
{\em The sample complexity of $\eps$-testing closeness between two $k$-histograms over $[n]$ depends in a subtle way 
on the relation between the relevant parameters $k, n$ and $1/\eps$.}
\end{quote}
We find this fact rather surprising
because such a phenomenon does {\em not} occur for the sample complexities of 
closely related problems. Specifically, testing the identity of a $k$-histogram over $[n]$ to a {\em fixed} distribution 
has sample complexity $\Theta(k^{1/2}/\eps^2)$~\cite{DKN:15}; and learning a $k$-histogram over $[n]$ 
has sample complexity $\Theta(k/\eps^2)$~\cite{CDSS14}. Note that both these sample bounds are independent of $n$ and are known to 
be tight for the entire range of parameters $k, n, 1/\eps$.

Our main positive result is a new closeness testing algorithm for $k$-histograms over $[n]$
with sample complexity 
$O\big(k^{2/3}\cdot \log^{4/3} (2+n/k) \log(k) / \eps^{4/3}\big).$
Combined with the known upper bound of \cite{DKN:15:FOCS}, we obtain the sample upper bound of
$$
O\big(\max\big(\min(k^{4/5}/\eps^{6/5},k^{2/3}\log^{4/3}(2+n/k)\log(k)/\epsilon^{4/3}),k^{1/2}\log^2(k)\log\log(k)/\eps^2 \big) \big).
$$
As our main negative result, we prove a lower bound of 
$\Omega(\min( k^{2/3}\log^{1/3}(2+n/k)/\eps^{4/3}, k^{4/5}/\eps^{6/5}))$.
The first term in this expression shows that the ``$\log(2+n/k)$'' factor that appears in the sample 
complexity of our upper bound is in fact necessary, up to a constant power.
In summary, these bounds provide a nearly-tight characterization of the 
sample complexity of our histogram testing problem for the entire range of parameters.

A few observations are in order to interpret the above bounds:
\begin{itemize}
\item When $n$ goes to infinity, the  $O(k^{4/5}/\eps^{6/5})$ upper bound of \cite{DKN:15:FOCS} is tight for $k$-histograms.

\item When $n = \poly(k)$ and $\eps$ is not too small (so that the $k^{1/2}/\eps^2$ term does not kick in),
then the right answer for the sample complexity of our problem is $(k^{2/3}/\eps^{4/3}) \polylog(k)$. 

\item The terms ``$k^{4/5}/\eps^{6/5}$'' and ``$k^{2/3}\log^{4/3}(2+n/k)\log(k)/\epsilon^{4/3}$'' appearing
in the sample complexity become equal when $n$ is exponential in $k$. Therefore, our new 
algorithm has better sample complexity than that of \cite{DKN:15:FOCS}  for all $n \leq 2^{O(k)}.$ 
\end{itemize}

In the following subsection, we state our results in a general setting 
and explain how the aforementioned applications are obtained from them.

\subsection{Our Results and Comparison to Prior Work}  \label{ssec:results}

For a given family $\mathcal{D}$ of discrete distributions over $[n]$, we are interested 
in designing a closeness tester for distributions in $\mathcal{D}$.
We work in the general framework introduced by~\cite{DKN:15, DKN:15:FOCS}.
Instead of designing a different tester for any given family $\mathcal{D}$,
the approach of~\cite{DKN:15, DKN:15:FOCS} proceeds by designing
a generic equivalence tester under a {\em different metric} than the $\ell_1$-distance.
This metric, termed $\mathcal{A}_k$-distance~\cite{DL:01}, where $k \ge 2$ is a positive integer,
interpolates between Kolmogorov distance (when $k=2$)
and the $\ell_1$-distance (when $k=n$). It turns out that, for a range of structured distribution families
$\mathcal{D}$, the $\mathcal{A}_k$-distance can be used as a proxy for the $\ell_1$-distance
for a value of $k \ll n$~\cite{CDSS14}. 
For example, if $\mathcal{D}$ is the family of $k$-histograms over $[n]$, the $\mathcal{A}_{2k}$
distance between them is tantamount to their $\ell_1$ distance. 
We can thus obtain an $\ell_1$ closeness tester for ${\cal D}$ by plugging in the right value of $k$ in a general $\mathcal{A}_k$ closeness tester.

\medskip

To formally state our results, we will need some terminology.

\medskip

\noindent {\bf Notation.}
We will use $p, q$ to denote the probability mass functions
of our distributions. If $p$ is discrete over support $[n]: = \{1, \ldots, n\}$, we denote
by $p_i$ the probability of element $i$ in the distribution.
For two discrete distributions $p, q$, their $\ell_1$ and $\ell_2$ distances are
$\|p -q \|_1 = \sum_{i=1}^n |p_i - q_i|$ and $\|p-q\|_2 = \sqrt{\sum_{i=1}^n (p_i - q_i)^2}$.
Fix a partition of the domain $I$ into disjoint intervals
$\mathcal{I} :=  (I_i)_{i=1}^{\ell}.$ For such a partition $\mathcal{I}$,
the {\em reduced distribution} $p_r^{\mathcal{I}}$ corresponding to $p$ and $\mathcal{I}$ is the discrete distribution over $[\ell]$
that assigns the $i$-th ``point'' the mass that $p$ assigns to the
interval $I_i$; i.e., for $i \in [\ell]$, $p_r^{\mathcal{I}} (i) = p(I_i)$.
Let $\mathfrak{J}_k$ be the collection
of all partitions of the domain $I$ into $k$ intervals. For $p, q: I \to \R_+$ and $k \in \Z_+$,
we define the $\mathcal{A}_k$-distance between $p$ and $q$ by
$\|p-q\|_{\mathcal{A}_k} \eqdef \max_{\mathcal{I} = (I_i)_{i=1}^{k} \in \mathfrak{J}_k} \littlesum_{i=1}^k |p(I_i) - q(I_i)|
= \max_{\mathcal{I}  \in \mathfrak{J}_k} \| p_r^{\mathcal{I} } - q_r^{\mathcal{I} } \|_1.$

\medskip

In this context,~\cite{DKN:15:FOCS} gave a closeness testing algorithm under the $\mathcal{A}_k$-distance
using $O(\max\{ k^{4/5}/\eps^{6/5}, k^{1/2}/\eps^2 \})$ samples.
It was also shown that this sample bound is
information--theoretically optimal (up to constant factors)
for some adversarially constructed continuous distributions, or discrete distributions of support size $n$ sufficiently large
as a function of $k$. These results raised two natural questions: 
(1) What is the {\em optimal} sample complexity of the $\mathcal{A}_k$-closeness testing 
problem as a function of $n, k, 1/\eps$? 
(2) Can we obtain tight sample lower bounds for {\em natural} families of structured distributions?

\medskip

We resolve both these open questions. Our main algorithmic result is the following:

\begin{theorem}\label{thm:main-alg}
Given sample access to distributions $p$ and $q$ on $[n]$ and $\epsilon>0$
there exists an algorithm that takes
$$
O\left(\max\left(\min\left(k^{4/5}/\epsilon^{6/5},k^{2/3}\log^{4/3}(2+n/k)\log(2+k)/\epsilon^{4/3} \right),k^{1/2}\log^2(k)\log\log(k)/\epsilon^2 \right) \right)
$$
samples from each of $p$ and $q$ and distinguishes with $2/3$ probability between the cases that $p=q$
and $\|p-q\|_{\mathcal{A}_k}\geq \epsilon$.
\end{theorem}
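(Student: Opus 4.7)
My plan is to prove Theorem~\ref{thm:main-alg} by combining three testers whose sample complexities dominate in disjoint regimes of $(k,n,\epsilon)$. The bound $O(k^{4/5}/\epsilon^{6/5})$ is obtained by directly invoking the $\mathcal{A}_k$-closeness tester of \cite{DKN:15:FOCS}. The bound $O(k^{1/2}\log^2(k)\log\log(k)/\epsilon^2)$ (which dominates only in the very small-$\epsilon$ regime) follows from a ``learn-then-test'' approach: first construct a sample-driven partition that approximately $\ell_2$-flattens $(p+q)/2$, then apply the standard CDVV $\ell_2$ tester on the reduced distribution, paying only the dependence on the effective support of the reduced distribution. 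The substantive new ingredient, and the focus of the sketch below, is a multi-scale tester with complexity $O(k^{2/3}\log^{4/3}(2+n/k)\log(2+k)/\epsilon^{4/3})$; the overall algorithm simply takes the best of the three bounds given the parameters.

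Set $L=\lceil\log_2(2+n/k)\rceil$. I would build a nested family of partitions $\Pi_0 \preceq \Pi_1 \preceq \cdots \preceq \Pi_L$ of $[n]$, where $\Pi_\ell$ has $m_\ell = \Theta(k\cdot 2^\ell)$ cells obtained by halving each cell of $\Pi_{\ell-1}$, with a random shift at the coarsest scale to avoid adversarial alignment. I would draw a single set of samples from each of $p$ and $q$ of size $O(k^{2/3}L^{4/3}\log(2+k)/\epsilon^{4/3})$ and, at each scale $\ell$, run the $\chi^2$-style $\ell_2$-closeness tester of \cite{CDVV14} on the empirical reduced distributions $\widehat{p}_r^{\Pi_\ell}, \widehat{q}_r^{\Pi_\ell}$, reusing samples across scales. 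The algorithm accepts iff every per-scale test accepts. The key structural claim is: if $\|p-q\|_{\mathcal{A}_k}\ge \epsilon$, then there exists some $\ell^\ast\in\{0,\ldots,L\}$ with
\[
\bigl\|p_r^{\Pi_{\ell^\ast}} - q_r^{\Pi_{\ell^\ast}}\bigr\|_1 \;\ge\; \epsilon_{\ell^\ast} \;:=\; c\cdot\frac{\epsilon\cdot 2^{\ell^\ast/2}}{L}
\]
for an absolute constant $c>0$. Given this claim, the per-scale CDVV tester on support $m_\ell$ with tolerance $\epsilon_\ell$ needs $O(m_\ell^{2/3}/\epsilon_\ell^{4/3}) = O(k^{2/3}L^{4/3}/\epsilon^{4/3})$ samples uniformly in $\ell$, so the sample count is independent of the scale; a union bound over the $L+1$ scales and an amplification of success probability to $1-1/\mathrm{poly}(k)$ contributes the additional $\log(2+k)$ factor.

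The main obstacle is the structural claim. The plan is to take an optimal witness $k$-interval partition $\mathcal{I}^\ast$ for $\|p-q\|_{\mathcal{A}_k}$ and replace each of its intervals by its canonical segment-tree decomposition relative to the (shifted) dyadic hierarchy, producing at most $O(kL)$ dyadic cells spread across the $L+1$ levels. The $\mathcal{A}_k$-mass is then distributed across levels as a weighted sum $\sum_\ell D_\ell$ where $D_\ell$ is the contribution from level-$\ell$ cells, and a pigeonhole argument with weights $2^{\ell/2}/L$ produces some $\ell^\ast$ with $D_{\ell^\ast}\ge \epsilon_{\ell^\ast}$. The genuinely delicate step is bounding the endpoint-misalignment loss: a random shift ensures that in expectation the probability mass of cells straddling an endpoint of $\mathcal{I}^\ast$ is $O(1/m_\ell)$ per endpoint, so the total boundary loss at scale $\ell$ is $O(k/m_\ell) = O(2^{-\ell})$, which (for a suitable choice of shift distribution) is absorbed into the constant $c$. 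Once the structural claim is established, the rest of the proof is routine bookkeeping combining the per-scale $\ell_2$ tester, the union bound, and the three-way choice among testers.
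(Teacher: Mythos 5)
The central structural claim in your sketch is false, and the algorithm built on it is not sound. Consider $p,q$ that agree outside of $k$ sibling pairs of the finest level of your hierarchy: on the $j$-th pair $\{a_j,a_j+1\}$ (a single cell of $\Pi_{L-1}$), let $p$ put mass $2\epsilon/k$ on $a_j$ and $q$ put mass $2\epsilon/k$ on $a_j+1$, with the remaining mass distributed identically. Using $k$ intervals whose endpoints isolate about half of the points $a_j$, one gets $\|p-q\|_{\mathcal{A}_k}\geq\epsilon$. Yet $p_r^{\Pi_\ell}=q_r^{\Pi_\ell}$ for every $\ell\leq L-1$, and at the finest scale $\|p_r^{\Pi_L}-q_r^{\Pi_L}\|_1\leq 2$, whereas your threshold there is $c\epsilon 2^{L/2}/L=\Theta(c\epsilon\sqrt{n/k}/L)$, which exceeds $2$ as soon as $n/k\gg (L/(c\epsilon))^2$. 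So no scale $\ell^\ast$ exists, and every per-scale test is run with a tolerance it is never required to flag, i.e.\ your tester accepts a pair at $\mathcal{A}_k$ distance $\geq\epsilon$. The random shift does not rescue this: a discrepancy pair that straddles a cell boundary cancels as soon as its two neighbouring cells merge, which happens within $O(1)$ levels in expectation, so the visible reduced distance at scale $L-j$ decays like $2^{-j}\epsilon$, still far below $c\epsilon 2^{(L-j)/2}/L$. The failure is structural rather than a fixable constant: any tolerance schedule keeping the per-scale CDVV cost at $O(k^{2/3}L^{4/3}/\epsilon^{4/3})$ must grow like $2^{\ell/2}$ in the cell count, but the only pigeonhole available from $\sum_\ell D_\ell\geq\epsilon$ over $L+1$ levels gives some $D_{\ell^\ast}\geq\epsilon/(L+1)$ with a level-independent threshold, not one growing with $\ell^\ast$.

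The missing idea, which is how the paper gets a per-scale cost independent of the scale's support size while keeping a flat tolerance $\epsilon/O(\log(n/k))$, is that the discrepancy \emph{lost at each merge} is special: by the paper's merging lemma, merging adjacent cells decreases the $\mathcal{A}_k$ distance by at most twice the discrepancy carried by at most $k$ cells, namely those containing endpoints of the optimal witness partition. Hence at each scale one only needs to detect an $\ell_1$ discrepancy of size $\epsilon/O(\log(n/k))$ supported on at most $k$ bins (the quantity $\|\cdot\|_{1,k}$, refined to $d_{k,1/m}$ in the full proof), and such a discrepancy forces $\|p-q\|_2\gtrsim \epsilon/(\log(n/k)\sqrt{k})$ no matter how large the reduced support is. Combining this with the split/flattening trick (take $m\approx k^{2/3}\log^{4/3}(n/k)/\epsilon^{4/3}$ samples from $p$ so that $\|p_S\|_2=O(1/\sqrt{m})$) and the CDVV $\ell_2$ tester gives per-scale cost $O(k^{2/3}\log^{4/3}(n/k)/\epsilon^{4/3})$, with samples reused across the $O(\log(n/k))$ scales; the full theorem additionally pre-isolates heavy dyadic intervals with a sample $S$ so that heavy bins are never merged, and finishes by running the warm-up tester on a $\mathrm{poly}(k)$-size domain, which is where the $k^{1/2}\log^2(k)\log\log(k)/\epsilon^2$ term actually comes from (your ``learn-then-test'' branch is also not worked out, but that is secondary). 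Without this restricted-support detection step, or a correct substitute for it, the multi-scale plan in your proposal does not prove the theorem.
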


As explained in~\cite{DKN:15, DKN:15:FOCS}, using Theorem~\ref{thm:main-alg} 
one can obtain testing algorithms for the $\ell_1$ closeness testing of various
distribution families ${\mathcal{D}}$, by using the $\mathcal{A}_k$ distance 
as a ``proxy'' for the $\ell_1$ distance:

\begin{fact} \label{fact:simple}
For a univariate distribution family ${\mathcal{D}}$ and $\eps>0$, let $k= {k({\mathcal{D}}, \eps)}$ 
be the smallest integer such that for any $f_1, f_2 \in {\mathcal{D}}$ it holds that
$\|f_1-f_2\|_1 \le \|f_1-f_2\|_{{\mathcal A}_k} + \eps/2$. Then there exists an $\ell_1$ closeness testing algorithm 
for $\mathcal{D}$ with the sample complexity of Theorem~\ref{thm:main-alg}.
\end{fact}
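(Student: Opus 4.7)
The plan is to reduce $\ell_1$ closeness testing on $\mathcal{D}$ directly to $\mathcal{A}_k$ closeness testing by invoking Theorem~\ref{thm:main-alg} as a black box. Concretely, given samples from $p, q \in \mathcal{D}$, I would set $k = k(\mathcal{D}, \eps)$ and run the algorithm of Theorem~\ref{thm:main-alg} with parameters $k$ and $\eps' = \eps/2$. The output of that algorithm will be the output of the $\ell_1$ tester.

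The correctness argument has two directions. In the completeness case, $p = q$ immediately implies $\|p-q\|_{\mathcal{A}_k} = 0 < \eps'$, so the $\mathcal{A}_k$-tester accepts with probability at least $2/3$. In the soundness case, I would use the defining property of $k(\mathcal{D}, \eps)$: since $p, q \in \mathcal{D}$, we have $\|p-q\|_1 \leq \|p-q\|_{\mathcal{A}_k} + \eps/2$, and so $\|p-q\|_1 \geq \eps$ forces $\|p-q\|_{\mathcal{A}_k} \geq \eps/2 = \eps'$, at which point Theorem~\ref{thm:main-alg} guarantees that the tester rejects with probability at least $2/3$.

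The sample complexity bound then follows by substituting $k = k(\mathcal{D}, \eps)$ and $\eps' = \eps/2$ into the bound of Theorem~\ref{thm:main-alg}; constant factors from the $\eps \mapsto \eps/2$ rescaling are absorbed into the $O(\cdot)$. There is essentially no obstacle here: the entire content is the observation that the $\mathcal{A}_k$-distance, with $k$ chosen as in the hypothesis, is a valid surrogate for $\ell_1$-distance on the class $\mathcal{D}$, so any $\mathcal{A}_k$-tester automatically yields an $\ell_1$-tester. The only mild subtlety worth flagging is that the reduction is purely one-sided in the promise: we only need $\|f_1 - f_2\|_1 \le \|f_1 - f_2\|_{\mathcal{A}_k} + \eps/2$ (the reverse inequality $\|f_1 - f_2\|_{\mathcal{A}_k} \le \|f_1 - f_2\|_1$ holds unconditionally for any two distributions, which is what we use in the completeness case), so the statement of the fact is exactly what we need and nothing more.
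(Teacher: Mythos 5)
Your proposal is correct and is exactly the reduction the paper has in mind: the paper states this fact without a written proof (citing the earlier works of Daskalakis--Kane--Nikishkin where the same observation appears), and the intended argument is precisely your one-line reduction of running the $\mathcal{A}_k$-tester of Theorem~\ref{thm:main-alg} with accuracy parameter $\eps/2$, using the defining property of $k(\mathcal{D},\eps)$ for soundness and the triviality $p=q \Rightarrow \|p-q\|_{\mathcal{A}_k}=0$ for completeness. No gaps; the constant-factor rescaling of $\eps$ is absorbed in the $O(\cdot)$ exactly as you say.
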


\paragraph{Applications.}
Our upper bound for $\ell_1$-testing of $k$-histogram distributions follows from 
the above by noting that for any $k$-histograms $p, q$ we have $\|p-q\|_1 = \|p-q\|_{\mathcal{A}_{2k}}$.
Also note that our upper bound is {\em robust}: it applies even 
if $p, q$ are $O(\eps)$-close in $\ell_1$-norm to being $k$-histograms.

Finally, we remark that our general $\mathcal{A}_k$ closeness tester yields improved
upper bounds for various other families of structured distributions. Consider for example the
case that $\mathcal{D}$ consists of all $k$-mixtures of some simple family (e.g., discrete Gaussians or log-concave),
where the parameter $k$ is large.
The algorithm of~\cite{DKN:15:FOCS} leads to a tester whose sample complexity scales with 
$O(k^{4/5})$, while Theorem~\ref{thm:main-alg} implies a $\tilde{O}(k^{2/3})$ bound.

\medskip

On the lower bound side, we show:

\begin{theorem} \label{thm:main-lb}
Let $p$ and $q$ be distributions on $[n]$ and let $\epsilon>0$ be less than a sufficiently small constant.
Any tester that distinguishes between $p=q$ and $\|p-q\|_{\mathcal{A}_k}\geq \eps$ for some $k\leq n$
must use $\Omega(m)$ samples for $m=\min(k^{2/3}\log^{4/3}(2+n/k)/\eps^{4/3},k^{4/5}/\eps^{6/5})$.

Furthermore, for $m=\min(k^{2/3}\log^{1/3}(2+n/k)/\eps^{4/3},k^{4/5}/\eps^{6/5})$, 
any tester that distinguishes between $p=q$ and $\|p-q\|_{\mathcal{A}_k}\geq \eps$ 
must use $\Omega(m)$ samples even if $p$ and $q$ are both guaranteed 
to be piecewise constant distributions on $O(k+m)$ pieces.
\end{theorem}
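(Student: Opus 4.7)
The plan is to apply the two-point Le Cam method. We exhibit two priors $\Dcal_0$ and $\Dcal_1$ on pairs of distributions $(p,q)$ on $[n]$: under $\Dcal_0$, $p=q$ (YES), while under $\Dcal_1$, $\|p-q\|_{\mathcal{A}_k}\geq\eps$ with probability at least $9/10$ (NO). After Poissonizing so that the tester receives $N\sim\mathrm{Poisson}(m)$ i.i.d.\ samples from each of $p$ and $q$, we upper-bound the $\chi^2$-divergence of the resulting sample distributions under $\Dcal_0$ versus $\Dcal_1$ by a constant; via $\dtv\leq\sqrt{\chi^2/2}$, this forces any tester to require $\Omega(m)$ samples.

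The target $m$ is the minimum of two terms, and we handle each separately. The $k^{4/5}/\eps^{6/5}$ term is inherited from the lower bound in \cite{DKN:15:FOCS}: their construction places all mass on $O(k)$ points (hence is piecewise constant on $O(k)$ pieces), so it applies verbatim in both the general and piecewise-constant settings of our statement. The rest of the argument focuses on the novel term $k^{2/3}\log^{\alpha}(2+n/k)/\eps^{4/3}$, with $\alpha=4/3$ in the general setting and $\alpha=1/3$ in the piecewise-constant setting.

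For the novel term we build a multi-scale construction. Let $L=\Theta(\log(2+n/k))$, and for each scale $s\in\{1,\ldots,L\}$ allocate a disjoint band $B_s\subseteq[n]$ partitioned into bins whose widths grow geometrically with $s$, subject to $\sum_s|B_s|\le n$. Inside $B_s$ we plant a rescaled CDVV-type hard pair $(p^{(s)},q^{(s)})$: $p^{(s)}$ is the uniform distribution on $B_s$ rescaled to total mass $1/L$, and $q^{(s)}=p^{(s)}+\mathbf{r}^{(s)}$ where $\mathbf{r}^{(s)}$ is constant on each bin of $B_s$ with independent Rademacher signs per bin, the amplitudes being chosen so that the per-scale contribution to $\|p-q\|_{\mathcal{A}_k}$ is $\Theta(\eps/L)$. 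Summing, $p=\sum_s p^{(s)}$ and $q=\sum_s q^{(s)}$. A single partition of $[n]$ into $k$ intervals that aligns with the dyadic boundaries of a well-chosen scale absorbs the contributions of all scales, yielding $\|p-q\|_{\mathcal{A}_k}\geq\Omega(\eps)$. Because the bands are disjoint, Poissonization makes the per-bin sample counts mutually independent, so the $\chi^2$-divergence decomposes as a sum of per-bin contributions bounded by the standard CDVV second-moment estimate; summing over all bands and tuning the per-scale amplitudes gives $\chi^2=O(1)$ precisely when $m=O(k^{2/3}\log^{4/3}(2+n/k)/\eps^{4/3})$.

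For the piecewise-constant variant, $p$ and $q$ must be supported on $O(k+m)$ pieces. We truncate the finest bands, those whose bin widths drop below $n/(k+m)$, forcing the residual instance to have $O(k+m)$ breakpoints. This truncation costs a $\log$ factor in the per-scale budget, dropping the exponent of $\log(2+n/k)$ from $4/3$ to $1/3$. The main technical hurdle is reconciling two opposing requirements: the Rademacher sign patterns must be correlated enough within each band that a single $k$-interval partition certifies $\|p-q\|_{\mathcal{A}_k}\geq\eps$ in the face of the random perturbations, yet independent enough across bands that cross-scale cross-terms do not inflate the $\chi^2$-divergence. Calibrating the per-scale amplitudes and the geometric progression of bin widths so that this trade-off lands exactly at the claimed bound is the crux of the argument.
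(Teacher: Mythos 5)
There is a genuine gap, and it is exactly at the point you flag as ``the crux.'' In a CDVV-style construction the hardness (the $\chi^2$ bound) is driven by the number of \emph{independent} sign cells carrying the perturbation, but the requirement $\|p-q\|_{\mathcal{A}_k}\geq\eps$ caps that number at roughly $k$: if a band contains many more independently signed cells than the number of intervals the certifying partition can devote to it, the signed discrepancies cancel inside those intervals and the $\mathcal{A}_k$ distance collapses; if instead you correlate the signs so that a $k$-interval partition captures $\Omega(\eps)$, the instance has effectively only $O(k)$ perturbation blocks and the standard second-moment computation gives a lower bound of order $k^{2/3}/\eps^{4/3}$ with \emph{no} dependence on $n$ — indeed such instances are detected by this paper's own merging-based tester at that cost. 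So the trade-off you defer to ``calibration'' is an obstruction, not a tuning problem, and your sketch never exhibits a mechanism that makes the domain size matter. The paper's mechanism is entirely different: in the NO case $p$ and $q$ are supported on \emph{disjoint halves} of a randomly placed interval inside each light bin (so the full bin mass contributes to $\mathcal{A}_k$ with no sign cancellation), and the only information distinguishing YES from NO comes from bins receiving at least three samples, or from the \emph{spacing} of exactly-two-sample pairs; the latter is hidden by stretching each interval by a random exponential factor, costing the tester all but $O(1/\log W)$ per pair event (Proposition~\ref{prop:lb}, giving $\log^{1/3}$), and $O(1/\log^2 W)$ together with the matched two-point noise family $\mathcal{F}$ in the refined construction (Lemma~\ref{distanceHideLem}, giving $\log^{4/3}$). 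Your proposal contains no analogue of this scale-randomization, and the exponents $4/3$ and $1/3$ of the logarithm are asserted (``truncation costs a log factor'') rather than derived.

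A secondary but real error: you dispose of the $k^{4/5}/\eps^{6/5}$ term by claiming the prior hard construction ``places all mass on $O(k)$ points, hence applies verbatim.'' That cannot be right: if both distributions are supported on $O(k)$ points, closeness testing needs only $O(k^{2/3}/\eps^{4/3}+\sqrt{k}/\eps^2)$ samples, which is $o(k^{4/5}/\eps^{6/5})$ whenever $\eps\gg 1/k$, so no support-$O(k)$ instance can witness that bound. The known $\Omega(k^{4/5}/\eps^{6/5})$ bound requires a continuous domain or $n$ very large as a function of $k$, and the regime where this term is the minimum in the theorem has to be checked against exactly that threshold; in the paper this term instead falls out of the same bin construction, from the contribution of bins with at least three samples, so no appeal to an external construction (or to its piecewise-constant structure) is needed.
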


Note that a lower bound of $\Omega(\sqrt{k}/\epsilon^2)$ straightforwardly applies 
even for $p$ and $q$ being $k$-histograms. This dominates the above bounds for $\epsilon < k^{-3/8}$.

We also note that our general lower bound with respect to the $\mathcal{A}_k$ distance is somewhat stronger,
matching the term ``$\log^{4/3}(2+n/k)$'' in our upper bound.


\subsection{Related Work} \label{ssec:related}
During the past two decades,  {\em distribution property testing}~\cite{BFR+:00}
-- whose roots lie in statistical hypothesis testing~\cite{NeymanP, lehmann2005testing} --
has received considerable attention by the computer science community,
see~\cite{Rub12, Canonne15} for two recent surveys.
The majority of the early work in this field has focused on characterizing the sample size needed to test properties
of arbitrary distributions of a given support size. After two decades of study, this ``worst-case''
regime is well-understood: for many properties of interest there exist
sample-optimal testers (matched by information-theoretic lower bounds)
~\cite{Paninski:08, CDVV14, VV14, DKN:15, DK16, DiakonikolasGPP16}.

In many settings of interest, we know a priori that the underlying distributions have some ``nice structure'' (exactly or approximately).
The problem of {\em learning} a probability distribution under such structural assumptions
is a classical topic in statistics, see \cite{BBBB:72} for a classical book, and~\cite{GJ:14} 
for a recent book on the topic, 
that has recently attracted the interest of computer 
scientists~\cite{DDS12soda, DDS12stoc, CDSS13, DDOST13focs, CDSS14, CDSS14b, ADHLS15,
DKS15a, DKS15b, DKS16, DDKT15,  ADLS17, DiakonikolasKS16a, DKS16lcd}.

On the other hand, the theory of {\em distribution testing} under structural assumptions is less fully developed.
More than a decade ago, Batu, Kumar, and Rubinfeld~\cite{BKR:04} considered a specific instantiation of this question --
testing the equivalence between two unknown discrete monotone distributions -- and obtained
a tester whose sample complexity is poly-logarithmic in the domain size. A recent sequence of works~\cite{DDSVV13, DKN:15, DKN:15:FOCS}
developed a framework to leverage such structural assumptions and obtained more efficient testers
for a number of natural settings. However, for several natural properties of interest
there is still a substantial gap between known sample upper and lower bounds. 

\subsection{Overview of Techniques} \label{sec:techniques}

To prove our upper bound, we use a technique of iteratively reducing the number of bins (domain elements).
In particular, we show that if we merge bins together in consecutive pairs,
this does not significantly affect the $\mathcal{A}_k$ distance between the distributions,
unless a large fraction of the discrepancy between our distributions is supported on $O(k)$
bins near the boundaries in the optimal partition. In order to take advantage of this,
we provide a novel identity tester that requires few samples to distinguish between
the cases where $p=q$ and the case where $p$ and $q$ have a large $\ell_1$ distance
supported on only $k$ of the bins. We are able to take advantage of the small support
essentially because having a discrepancy supported on few bins implies
that the $\ell_2$ distance between the distributions must be reasonably large.

Our new lower bounds are somewhat more involved.
We prove them by exhibiting explicit families of pairs of distributions,
where in one case $p=q$ and in the other $p$ and $q$ have large $\mathcal{A}_k$ distance,
but so that it is information-theoretically impossible to distinguish
between these two families with a small number of samples.
In both cases, $p$ and $q$ are explicit piecewise constant distributions
with a small number of pieces. In both cases, our domain is partitioned
into a small number of bins and the restrictions of the distributions
to different bins are independent, making our analysis easier.
In some bins we will have $p=q$ each with mass about $1/m$
(where $m$ is the number of samples). These bins will serve
the purpose of adding ``noise'' making harder to read the ``signal''
from the other bins. In the remaining bins, we will have either that
$p=q$ being supported on some interval, or $p$ and $q$ will be
supported on consecutive, non-overlapping intervals.
If three samples are obtained from any one of these intervals,
the order of the samples and the distributions that they come
from will provide us with information about which family we came from.
Unfortunately, since triple collisions are relatively uncommon,
this will not be useful unless $m\gg \max(k^{4/5}/\epsilon^{6/5},k^{1/2}/\epsilon^{2})$.
Bins from which we have one or zero samples will tell us nothing,
but bins from which we have exactly two samples may provide information.

For these bins, it can be seen that we learn nothing from the ordering of the samples,
but we may learn something from their spacing.
In particular, in the case where $p$ and $q$ are supported on disjoint intervals,
we would suspect that two samples very close to each other are far more likely
to be taken from the same distribution rather than from opposite distributions.
On the other hand, in order to properly interpret this information,
we will need to know something about the scale of the distributions
involved in order to know when two points should be considered to be ``close''.
To overcome this difficulty, we will stretch each of our distributions
by a random exponential amount. This will effectively conceal any information
about the scales involved so long as the total support size of our distributions
is exponentially large.

\section{A Near-Optimal Closeness Tester over Discrete Domains} \label{sec:algo}

\subsection{Warmup: A Simpler Algorithm }

We start by giving a simpler algorithm establishing 
a basic version of Theorem \ref{thm:main-alg} with slightly worse parameters:

\begin{proposition}\label{algorithmProp}
Given sample access to distributions $p$ and $q$ on $[n]$ and $\epsilon>0$
there exists an algorithm that takes
$$
O\left(k^{2/3}\log^{4/3}(3+n/k)\log\log(3+n/k)/\epsilon^{4/3}+\sqrt{k}\log^{2}(3+n/k)\log\log(3+n/k)/\eps^2 \right)
$$
samples from each of $p$ and $q$ and distinguishes with $2/3$ probability between the cases that $p=q$
and $\|p-q\|_{\mathcal{A}_k}\geq \epsilon$.
\end{proposition}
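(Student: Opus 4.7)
The plan is to reduce closeness testing under the $\mathcal{A}_k$ metric to $L := \lceil \log_2(n/k)\rceil + 1$ instances of a ``sparse'' closeness test at successively coarser levels plus a single $\ell_1$ closeness test at the coarsest level. For $i \in \{0, \ldots, L\}$ let $\mathcal{J}_i$ partition $[n]$ into consecutive intervals of length $2^i$ (truncating the last), let $p^{(i)}, q^{(i)}$ be the reduced distributions on the $\lceil n/2^i\rceil$ super-bins, and set $\Delta_i := \|p^{(i)}-q^{(i)}\|_{\mathcal{A}_k}$, so $\Delta_0 = \|p-q\|_{\mathcal{A}_k}$ and $\Delta_L = \|p^{(L)}-q^{(L)}\|_1$ (the level-$L$ support has only $O(k)$ points, so every $k$-interval partition is trivial). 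First I would prove a telescoping lemma: $\Delta_i - \Delta_{i+1}$ is at most the total $|p^{(i)}-q^{(i)}|$ mass on at most $2k$ ``boundary'' level-$i$ bins, namely the pairs of adjacent level-$i$ bins that get merged into a single level-$(i+1)$ bin and are split across a boundary of the optimal level-$(i+1)$ $\mathcal{A}_k$ partition. This is because lifting the optimal level-$(i+1)$ partition back to level $i$ yields a valid $k$-interval partition of $\mathcal{J}_i$ whose $\mathcal{A}_k$-objective differs from $\Delta_{i+1}$ only by the discrepancy on those straddling bins, and $\Delta_i$ is at least that objective.

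The main subroutine I would construct is a $2k$-sparse closeness tester: given sample access to distributions $p', q'$ on an arbitrary support, it distinguishes $p'=q'$ from the existence of a set $T$ with $|T|\le 2k$ and $\sum_{j\in T}|p'(j)-q'(j)|\ge \delta$, using $\tilde O(k^{2/3}/\delta^{4/3}+\sqrt{k}/\delta^2)$ samples independently of the support size. I would implement this by a randomized flattening that splits each bin of mass above $1/k$ into virtual copies, so that every bin has mass $O(1/k)$ and $\|p'\|_2^2 + \|q'\|_2^2 = O(1/k)$ on the flattened support. A $k$-sparse $\ell_1$ discrepancy of $\delta$ yields an $\ell_2$ discrepancy $\Omega(\delta/\sqrt{k})$ that is preserved under flattening, so the CDVV $\ell_2$ closeness tester on the flattened pair contributes the $\sqrt{k}/\delta^2$ term, while the $k^{2/3}/\delta^{4/3}$ term follows from probability-bucketing on top of the flattened scale in the CDVV style.

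The algorithm then runs the sparse tester at every level $i < L$ with threshold $\delta_i := \eps/(4L)$, and the CDVV $\ell_1$ closeness tester on $p^{(L)}, q^{(L)}$ with threshold $\eps/2$, repeating each call enough times to drive the per-test error probability below $1/(3L)$; this probability amplification is the source of the $\log\log(n/k)$ factor. The algorithm outputs ``unequal'' iff some test rejects. Soundness is a union bound over the $L+1$ tests. For completeness, if all tests accept, then combining the telescoping lemma across levels with the level-$L$ guarantee yields $\Delta_0 \le \delta_L + \sum_{i<L}\delta_i \le 3\eps/4$, contradicting $\Delta_0 \ge \eps$. Summing the sparse-tester cost over $L$ levels gives $\tilde O(k^{2/3}L^{4/3}/\eps^{4/3} + \sqrt{k}L^2/\eps^2)$, which matches the proposition's bound. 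The principal technical obstacle is establishing the support-size-independent guarantee of the sparse subroutine after randomized flattening: I must argue that the flattening neither manufactures nor erases the $\ell_2$ gap corresponding to a sparse $\ell_1$ discrepancy and that the second-moment analysis of the CDVV statistic carries through when the number of virtual bins is itself random and only implicitly defined by the samples drawn.
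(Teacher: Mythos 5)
Your high-level skeleton is the paper's: iterated pairwise merging of bins, a per-level tester for sparse $\ell_1$ discrepancy with threshold $\eps/\Theta(\log(n/k))$, amplification to error $1/\Theta(\log(n/k))$ (the source of the $\log\log$ factor), a union bound, and a telescoping inequality bounding $\|p-q\|_{\mathcal{A}_k}$ by the sum of the per-level sparse discrepancies. The genuine gap is in your sparse subroutine, specifically the $k^{2/3}/\delta^{4/3}$ term. Flattening every bin to mass $O(1/k)$ cannot be done ``for free'': the masses are unknown, so the only implementable version is the sample-based split (take a multiset $S$ of samples from $p$ and split bin $i$ into $1+|S\cap\{i\}|$ pieces), and driving the scale down to $1/k$ requires $|S|=\Theta(k)$ samples. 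But when $\delta\gg k^{-1/4}$ -- exactly the regime in which $k^{2/3}/\delta^{4/3}$ is the binding term -- we have $k\gg k^{2/3}/\delta^{4/3}$, so the splitting step alone already blows the budget, and the residual appeal to ``probability-bucketing in the CDVV style'' does not specify a mechanism that avoids this. The paper's resolution is simply to split at the coarser scale: take $m=\min(k^{2/3}/\delta^{4/3},k)$ samples for $S$, so $\|p_S\|_2=O(m^{-1/2})$; the crucial observation is that splitting increases the support of the discrepancy only from $k$ to at most $k+m=O(k)$ bins, so Cauchy--Schwarz still gives $\|p_S-q_S\|_2^2=\Omega(\delta^2/k)$, and the CDVV $\ell_2$ tester then costs $O(km^{-1/2}/\delta^2)=O(k^{2/3}/\delta^{4/3}+\sqrt{k}/\delta^2)$, with the $m$ splitting samples absorbed into the same bound. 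Your write-up needs this choice of scale (and the sparsity-after-splitting observation) to actually reach the claimed complexity.

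A second, more minor point: your justification of the telescoping lemma runs in the wrong direction. Lifting the optimal level-$(i+1)$ partition down to level $i$ only shows $\Delta_i\ge\Delta_{i+1}$, which is not what is needed. The correct argument takes the optimal level-$i$ partition achieving $\Delta_i$ and rounds its endpoints to level-$(i+1)$ boundaries; each of the at most $k$ rounded endpoints moves across at most one level-$i$ bin and perturbs the objective by at most twice that bin's discrepancy, after which the rounded partition is a legal level-$(i+1)$ partition whose objective is at most $\Delta_{i+1}$. The statement of your lemma is correct (up to the factor of $2$, which the threshold slack absorbs), so this is a fixable slip rather than a structural problem, but as written the proof of the lemma would not go through.
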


The basic idea of our algorithm is the following:
From the distributions $p$ and $q$ construct new distributions $p'$ and $q'$
by merging pairs of consecutive buckets.
Note that $p'$ and $q'$ each have much smaller domains (of size about $n/2$).
Furthermore, note that the $\mathcal{A}_k$ distance between $p$ and $q$ is
$\sum_{I\in\mathcal{I}}|p(I)-q(I)|$ for some partition $\mathcal{I}$ into $k$ intervals.
By using essentially the same partition, we can show that $\|p'-q'\|_{\mathcal{A}_k}$
should be almost as large as $\|p-q\|_{\mathcal{A}_k}$. This will in fact hold
unless much of the error between $p$ and $q$ is supported at points
near the endpoints of intervals in $\mathcal{I}$.
If this is the case, it turns out there is an easy algorithm to detect this discrepancy.
We require the following definitions:

\begin{definition}
For a discrete distribution $p$ on $[n]$, the merged distribution obtained from $p$ is
the distribution $p'$ on $\lceil n/2\rceil$, so that $p'(i) \eqdef p(2i)+p(2i +1)$.
For a partition $\mathcal{I}$ of $[n]$ , define the \textit{divided partition}
$\mathcal{I}'$ of domain $\lceil n/2\rceil$, so that $I'_i \in \mathcal{I'}$
 has the points obtained by point-wise gluing together odd points and even points.

Note that one can simulate a sample from $p'$ given a sample from $p$ by letting $p'=\lceil p/2 \rceil$.
\end{definition}

\begin{definition}
Let $p$ and $q$ be distributions on $[n]$.
For integers $k\geq 1$, let $\|p-q\|_{1,k}$ be the sum of the largest $k$
values of $|p(i)-q(i)|$ over $i \in [n]$.
\end{definition}

We begin by showing that either $\|p'-q'\|_{\mathcal{A}_k}$
is close to $\|p-q\|_{\mathcal{A}_k}$ or $\|p-q\|_{1,k}$ is large.

\begin{lemma}\label{smallErrorSupportLem}
For any two distributions $p$ and $q$ on $[n]$, let $p'$ and $q'$ be the merged distributions. Then,
$$
\|p-q\|_{\mathcal{A}_k} \leq \|p'-q'\|_{\mathcal{A}_k} + 2\|p-q\|_{1,k} \;.
$$
\end{lemma}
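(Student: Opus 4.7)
The plan is to start with a partition $\mathcal{I}=(I_1,\ldots,I_k)\in\mathfrak{J}_k$ achieving $\|p-q\|_{\mathcal{A}_k}$ and to deform it into a partition $\mathcal{I}^\star$ whose interval endpoints never split one of the consecutive pairs used to define $p'$ and $q'$. Any such $\mathcal{I}^\star$ descends to a partition of the merged domain into at most $k$ intervals $J_j$ satisfying $p(I_j^\star)=p'(J_j)$ and $q(I_j^\star)=q'(J_j)$, so $\sum_j|p(I_j^\star)-q(I_j^\star)|\leq \|p'-q'\|_{\mathcal{A}_k}$, and the entire lemma reduces to controlling the cost of the deformation by $2\|p-q\|_{1,k}$.

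First, I would classify each internal boundary of $\mathcal{I}$ as \emph{good} if it lies between two distinct merged pairs and \emph{bad} if it splits one such pair. Each bad boundary can be repaired by shifting it one position in either direction, which moves exactly one endpoint of the split pair from one of the two adjacent intervals into the other. Since $\mathcal{I}$ has at most $k-1$ internal boundaries, the repair requires at most $k-1$ such shifts; moreover, distinct bad boundaries live in disjoint pairs, so the reassigned points are pairwise distinct.

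Next, I would tally the cost. Moving a single point $x$ between two adjacent intervals $I_j$ and $I_{j+1}$ changes each of $|p(I_j)-q(I_j)|$ and $|p(I_{j+1})-q(I_{j+1})|$ by at most $|p(x)-q(x)|$, so $\sum_j|p(I_j)-q(I_j)|$ shifts by at most $2|p(x)-q(x)|$. Summing over the set $S$ of (at most $k-1$) reassigned points yields a total perturbation of at most $2\sum_{x\in S}|p(x)-q(x)|\leq 2\|p-q\|_{1,k}$ directly from the definition of $\|\cdot\|_{1,k}$. Chaining the two estimates gives
\[
\|p-q\|_{\mathcal{A}_k}=\sum_j|p(I_j)-q(I_j)|\leq \sum_j|p(I_j^\star)-q(I_j^\star)|+2\|p-q\|_{1,k}\leq \|p'-q'\|_{\mathcal{A}_k}+2\|p-q\|_{1,k}.
\]

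The one mild nuisance I anticipate is the bookkeeping when a narrow interval $I_j$ is bracketed simultaneously by two bad boundaries; the cleanest resolution is to perform all shifts in a single fixed direction (say, always toward smaller indices), which at worst collapses $I_j$ to the empty interval (harmlessly discarded in the final partition) while still reassigning exactly one point per shifted boundary, so the accounting above goes through without modification.
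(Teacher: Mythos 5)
Your proposal is correct and follows essentially the same route as the paper: take an optimal partition for $\|p-q\|_{\mathcal{A}_k}$, round its boundaries onto the merged-pair grid (the paper does this by rounding endpoints to even/odd, i.e.\ your ``fixed direction'' shifts), observe the rounded partition descends to the merged domain, and charge the at most $k$ reassigned points $2|p(i)-q(i)|$ each, which is at most $2\|p-q\|_{1,k}$. Your handling of bad boundaries and collapsing intervals is just a more explicit version of the paper's bookkeeping.
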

\begin{proof}
Let $\mathcal{I}$ be the partition of $[n]$ into $k$ intervals so that
$\|p-q\|_{\mathcal{A}_k} = \sum_{I\in \mathcal{I}}|p(I)-q(I)|.$
Let $\mathcal{I'}$ be obtained from $\mathcal{I}$ by rounding
each upper endpoint of each interval except for the last down
to the nearest even integer, and rounding the lower endpoint
of each interval up to the nearest odd integer. Note that
$$
\sum_{I\in\mathcal{I'}} |p(I)-q(I)| = \sum_{I\in\mathcal{I'}} |p'(I/2)-q'(I/2)| \leq \|p'-q'\|_{\mathcal{A}_k} \;.
$$
The partition $\mathcal{I'}$ is obtained from $\mathcal{I}$
by taking at most $k$ points and moving them from one interval to another.
Therefore, the difference
$$
\left|\sum_{I\in\mathcal{I}} |p(I)-q(I)|-\sum_{I\in\mathcal{I'}} |p(I)-q(I)| \right| \;,
$$
is at most twice the sum of $|p(i)-q(i)|$
over these $k$ points, and therefore at most $2\|p-q\|_{1,k}$.
Combing this with the above gives our result.
\end{proof}

Next, we need to show that if two distributions have $\|p-q\|_{1,k}$ large that this can be detected easily.

\begin{lemma}\label{BkAlgLem}
Let $p$ and $q$ be distributions on $[n]$.
Let $k>0$ be a positive integer, and $\epsilon>0$.
There exists an algorithm which takes $O(k^{2/3}/\epsilon^{4/3}+\sqrt{k}/\eps^2)$
samples from each of $p$ and $q$ and, with probability at least $2/3$,
distinguishes between the cases that $p=q$ and $\|p-q\|_{1,k}>\epsilon$.
\end{lemma}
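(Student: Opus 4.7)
The plan is to deploy a Poissonized chi-squared-style closeness tester in the spirit of \cite{CDVV14}. The key structural observation, which allows the sample complexity to scale with $k$ rather than the domain size $n$, is the following Cauchy--Schwarz inequality: if $\|p-q\|_{1,k}>\epsilon$ and $T\subseteq[n]$ is any set of $k$ indices realizing $\|p-q\|_{1,k}$, then $\sum_{i\in T}|p_i-q_i|\geq\epsilon$, and since $\sum_{i\in T}(p_i+q_i)\leq 2$,
$$\sum_{i\in T}\frac{(p_i-q_i)^2}{p_i+q_i}\;\geq\;\frac{\bigl(\sum_{i\in T}|p_i-q_i|\bigr)^2}{\sum_{i\in T}(p_i+q_i)}\;\geq\;\frac{\epsilon^2}{2}.$$
Thus the (symmetric) chi-squared distance between $p$ and $q$, restricted to just $k$ bins, is already $\Omega(\epsilon^2)$ -- exactly the signal lower bound that the CDVV $\ell_1$-closeness tester takes advantage of in the support-$k$ case.

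The tester itself is standard: draw $\mathrm{Poi}(m)$ samples from each of $p$ and $q$, giving independent counts $X_i\sim\mathrm{Poi}(mp_i)$ and $Y_i\sim\mathrm{Poi}(mq_i)$, and compute
$$Z\;=\;\sum_{i:\,X_i+Y_i>0}\frac{(X_i-Y_i)^2-X_i-Y_i}{X_i+Y_i}.$$
Conditioning on $N_i:=X_i+Y_i$ and using that $X_i\mid N_i\sim\mathrm{Binomial}(N_i,p_i/(p_i+q_i))$ yields, after a direct computation, $\E[Z]=0$ under the null and $\E[Z_i]=(p_i-q_i)^2\bigl(m(p_i+q_i)-1+e^{-m(p_i+q_i)}\bigr)/(p_i+q_i)^2$ in general, which is $\gtrsim m(p_i-q_i)^2/(p_i+q_i)$ whenever $m(p_i+q_i)\gtrsim 1$. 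Summing over $i\in T$ and invoking the Cauchy--Schwarz bound yields $\E[Z]\gtrsim m\epsilon^2$ under the alternative.

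To conclude, we bound $\var[Z]$ and apply Chebyshev's inequality. Following the CDVV variance analysis, $\var[Z]$ decomposes into a ``null-like'' contribution bounded (up to constants) by the expected number of bins receiving at least two samples, plus an ``alternative-driven'' contribution that is a function of $(p_i-q_i)^2$ and $m(p_i+q_i)$. The structural promise on $\|\cdot\|_{1,k}$ enters crucially here: via a level-set/flattening argument over mass scales, the null-like contribution can be capped at $O(k)$ rather than $O(n)$, because only $k$ bins can carry meaningful discrepancy and on each mass scale only $O(k)$ bins effectively matter. Balancing the signal $m\epsilon^2$ against the square root of the total variance then produces the two terms $k^{2/3}/\epsilon^{4/3}$ and $\sqrt{k}/\epsilon^2$ exactly as in the classical support-$k$ CDVV tradeoff.

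The main obstacle is precisely controlling the variance when the tester operates on all of $[n]$ but the discrepancy is supported on only $k$ bins. The plan is to truncate out bins for which $m(p_i+q_i)$ is so small that Poisson mass there is $o(1)$ (these contribute negligibly to both mean and variance), and to handle the rest via a bucket-by-mass decomposition, showing that on each mass scale at most $O(k)$ bins contribute to either the null-like or the alternative-driven variance term. Once this reduction to the ``effective support-$k$'' case is achieved, the remaining analysis is a direct adaptation of the CDVV signal/noise balance.
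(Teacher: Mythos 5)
There is a genuine gap in your variance analysis, and it sits exactly where the $\sqrt{k}/\eps^2$ term has to come from. Your claim that the ``null-like'' variance contribution can be capped at $O(k)$ ``because only $k$ bins can carry meaningful discrepancy'' conflates signal with noise: under the null, $p=q$ is an arbitrary distribution on $[n]$ (the lemma assumes nothing structural about $p,q$ themselves, only about $\|p-q\|_{1,k}$ in the far case), and the null variance of your statistic is $\Theta(\E[\#\{i: X_i+Y_i\ge 2\}])$, which is governed by where the mass of $p+q$ sits, not by any discrepancy structure. If $p=q$ spreads its mass uniformly over $\min(n,m)$ elements this is $\Theta(\min(n,m))$, and since $m$ can be much larger than $k$ (e.g.\ $m=\sqrt{k}/\eps^2\gg k$ when $\eps\ll k^{-1/4}$), no truncation or bucket-by-mass argument caps it at $O(k)$: the offending bins are bins where $p=q$, so they can be neither identified nor discarded. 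Concretely, for $\eps\ll k^{-1/4}$ take the alternative in which the discrepancy $\eps$ sits on a single bin of constant mass while the remaining mass is uniform over $\sim m$ bins with $p=q$ there: your signal is $\Theta(m\eps^2)=\Theta(\sqrt{k})$ while the null fluctuation of $Z$ is $\Theta(\sqrt{m})=\Theta(k^{1/4}/\eps)\gg\sqrt{k}$, so thresholding $Z$ with $m=O(k^{2/3}/\eps^{4/3}+\sqrt{k}/\eps^2)$ samples fails. (A smaller issue: $\E[Z]\gtrsim m\eps^2$ also requires $m(p_i+q_i)\gtrsim 1$ on the discrepancy bins, which is not guaranteed --- each may have mass $\eps/k\ll 1/m$ --- though that part could be repaired in the $k^{2/3}/\eps^{4/3}$ regime.)

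The missing idea, and the route the paper takes, is to reduce the effective $\ell_2$ norm before testing rather than normalizing bin by bin. Draw $m=\min(k^{2/3}/\eps^{4/3},k)$ samples $S$ from $p$ and pass to the split distributions $p_S,q_S$ of \cite{DK16}: splitting can be simulated sample-for-sample, preserves $\|p-q\|_1$, guarantees $\|p_S\|_2=O(m^{-1/2})$ with high probability no matter how heavy the bins of $p$ are, and the discrepancy is still carried by at most $k+m\le 2k$ bins, so by Cauchy--Schwarz $\|p_S-q_S\|_2^2\ge\eps^2/(2k)$ in the far case. Feeding this into the $\ell_2$ closeness tester of \cite{CDVV14}, whose sample complexity is $O(b/\eps_2^2)$ with $b\ge\|p_S\|_2$ (hence independent of the support size), gives $O(m+km^{-1/2}/\eps^2)=O(k^{2/3}/\eps^{4/3}+\sqrt{k}/\eps^2)$. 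It is this flattening step --- not a finer analysis of a global chi-squared statistic --- that tames heavy bins and produces the $\sqrt{k}/\eps^2$ term.
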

Note that if we needed to distinguish between $p=q$ and $\|p-q\|_1 > \epsilon$, 
this would require $\Omega(n^{2/3}/\epsilon^{4/3}+\sqrt{n}/\eps^2)$ samples. 
However, the optimal testers for this problem are morally $\ell_2$-testers. 
That is, roughly, they actually distinguish between $p=q$ and $\|p-q\|_2 > \epsilon/\sqrt{n}$. 
From this viewpoint, it is clear why it would be easier to test for discrepancies in $\| - \|_{1,k}$-distance, 
since if $\|p-q\|_{1,k} > \epsilon$, then $\|p-q\|_2 > \epsilon/\sqrt{k}$, making it easier for our $\ell_2$-type tester to detect the difference.

Our general approach will be by way of the techniques developed in \cite{DK16}. We begin by giving the definition of a split distribution coming from that paper:

\new{
\begin{definition}
Given a distribution $p$ on $[n]$ and a multiset $S$ of elements of $[n]$, define the \emph{split distribution} $p_S$ on $[n+|S|]$ as follows:
For $1\leq i\leq n$, let $a_i$ denote $1$ plus the number of elements of $S$ that are equal to $i$.
Thus, $\sum_{i=1}^n a_i = n+|S|.$ We can therefore associate the elements of $[n+|S|]$ to elements of the set
$B=\{(i,j):i\in [n], 1\leq j \leq a_i\}$.
We now define a distribution $p_S$ with support $B$, by letting a random sample from $p_S$ be given by $(i,j)$,
where $i$ is drawn randomly from $p$ and $j$ is drawn randomly from $[a_i]$.
\end{definition}
We now recall two basic facts about split distributions:
\begin{fact}[\cite{DK16}]\label{splitDistributionFactsLem}
Let $p$ and $q$ be probability distributions on $[n]$, and $S$ a given multiset of $[n]$. Then:
(i) We can simulate a sample from $p_S$ or $q_S$ by taking a single sample from $p$ or $q$, respectively.
(ii) It holds $\|p_S-q_S\|_1 = \|p-q\|_1$.
\end{fact}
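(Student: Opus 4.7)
The plan is to prove the two claims directly from the definition of the split distribution $p_S$, using a short calculation of its probability mass function. Both parts are essentially bookkeeping, so the proposal is to make the bookkeeping explicit rather than to introduce any new machinery.

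For part (i), the definition of $p_S$ already presents it as a two-stage procedure: first draw $i$ from $p$, then draw $j$ uniformly from $[a_i]$. The first stage is exactly a sample from $p$, so simulating $p_S$ from a single sample from $p$ reduces to performing the second stage. The key point to emphasize is that the second stage depends only on $i$ and on the known multiset $S$ (through the count $a_i$); it does not require any further knowledge of $p$. Thus, given a sample $i$ from $p$, one draws $j$ uniformly from $\{1,\dots,a_i\}$ and outputs $(i,j)\in B$, and the distribution of this output matches the definition of $p_S$ by construction. The identical argument works for $q_S$.

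For part (ii), the plan is to read off the probability mass function of $p_S$ from the two-stage sampling description. By independence of the two stages, $p_S(i,j)=p_i/a_i$ for every $(i,j)\in B$, and likewise $q_S(i,j)=q_i/a_i$. Then a one-line calculation gives
\[
\|p_S-q_S\|_1 \;=\; \sum_{i=1}^n\sum_{j=1}^{a_i}\left|\frac{p_i-q_i}{a_i}\right| \;=\; \sum_{i=1}^n a_i\cdot\frac{|p_i-q_i|}{a_i} \;=\; \sum_{i=1}^n|p_i-q_i| \;=\; \|p-q\|_1,
\]
which is (ii).

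There is no real obstacle in either direction; the only subtlety is to make sure that the support and the pointwise mass of $p_S$ are correctly extracted from the generative definition, i.e., to justify the formula $p_S(i,j)=p_i/a_i$ carefully (noting that the marginal of the first coordinate under $p_S$ is exactly $p$, and that conditioned on the first coordinate being $i$, the second is uniform on $[a_i]$). Once that formula is in hand, both (i) and (ii) follow immediately.
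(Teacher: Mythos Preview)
Your proposal is correct. The paper does not actually include a proof of this fact---it is stated as a cited result from \cite{DK16}---so there is nothing to compare against; your direct verification from the generative definition of $p_S$ (reading off $p_S(i,j)=p_i/a_i$ and summing) is exactly the intended elementary argument.
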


\begin{lemma}[\cite{DK16}]\label{splitL2Lem}
Let $p$ be a distribution on $[n]$. Then:
(i) For any multisets $S\subseteq S'$ of $[n]$, $\|p_{S'}\|_2 \leq \|p_S\|_2$, and
(ii) If $S$ is obtained by taking $m$ samples from $p$, then $\E[\|p_S\|_2^2] \leq 1/m$.
\end{lemma}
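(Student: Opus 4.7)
The plan is to first derive a closed-form expression for $\|p_S\|_2^2$ directly from the definition of the split distribution, and then to read off both parts from it. By construction, the element $(i,j) \in B$ carries mass $p(i)/a_i$, so
\[
\|p_S\|_2^2 \;=\; \sum_{i=1}^n \sum_{j=1}^{a_i} \bigl(p(i)/a_i\bigr)^2 \;=\; \sum_{i=1}^n \frac{p(i)^2}{a_i}.
\]
With this identity in hand, both claims reduce to controlling the sum on the right-hand side.

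For part (i), if $S \subseteq S'$ then the per-bin counts satisfy $a'_i \geq a_i$ for every $i$, hence $p(i)^2/a'_i \leq p(i)^2/a_i$ term by term, and summing yields $\|p_{S'}\|_2^2 \leq \|p_S\|_2^2$. This is a routine monotonicity step and I do not anticipate any difficulty.

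For part (ii), if $S$ consists of $m$ independent samples from $p$ then $a_i = 1 + X_i$ with $X_i \sim \mathrm{Binomial}(m,p(i))$, so by linearity of expectation
\[
\E\bigl[\|p_S\|_2^2\bigr] \;=\; \sum_{i=1}^n p(i)^2 \, \E\bigl[1/(1+X_i)\bigr].
\]
The crux of the proof is thus the binomial identity
\[
\E\bigl[1/(1+X_i)\bigr] \;=\; \frac{1 - (1-p(i))^{m+1}}{(m+1)\,p(i)} \;\leq\; \frac{1}{m\,p(i)},
\]
which I would derive by using $\binom{m}{k}/(k+1)=\binom{m+1}{k+1}/(m+1)$, reindexing via $j = k+1$, and recognizing the resulting expression as $(p(i)+(1-p(i)))^{m+1}$ minus its $j=0$ term, all divided by $(m+1)\,p(i)$. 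Plugging this bound back into the sum yields $\E[\|p_S\|_2^2] \leq \sum_i p(i)/m = 1/m$, as required. The only step that demands genuine care is the binomial identity together with the bookkeeping that $1/(m+1) \leq 1/m$; every other step is immediate from the definition of $p_S$.
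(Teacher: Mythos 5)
Your proof is correct: the identity $\|p_S\|_2^2=\sum_i p(i)^2/a_i$, the termwise monotonicity for (i), and the binomial computation $\E[1/(1+\mathrm{Bin}(m,p(i)))]=\frac{1-(1-p(i))^{m+1}}{(m+1)p(i)}\le \frac{1}{m\,p(i)}$ for (ii) all go through (with the harmless convention that bins with $p(i)=0$ contribute nothing). The paper itself does not prove this lemma—it cites it from~\cite{DK16}—and your argument is essentially the standard one given there, so there is nothing further to reconcile.
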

}


We also recall an optimal $\ell_2$ closeness tester under the promise that one of the 
distributions has smal $\ell_2$ norm:

\begin{lemma}[\cite{CDVV14}] \label{L2TestLem}
Let $p$ and $q$ be two unknown distributions on $[n]$.
There exists an algorithm that on input $n$,  $b \geq \min \{\|p\|_2, \|q\|_2 \}$
and $0< \eps < \sqrt{2}b$, 
draws $O(b/\eps^2)$ samples
from each of $p$ and $q$ and, with probability at least $2/3$,
distinguishes between the cases that $p=q$ and $\|p-q\|_2 > \eps.$
\end{lemma}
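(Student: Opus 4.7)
The plan is to implement the Poissonized chi-squared style estimator of \cite{CDVV14}. First, I Poissonize, drawing $N \sim \mathrm{Poi}(m)$ samples from each of $p$ and $q$; this costs only a constant factor in sample complexity and makes the per-bin counts $X_i \sim \mathrm{Poi}(m p_i)$ and $Y_i \sim \mathrm{Poi}(m q_i)$ mutually independent across $i \in [n]$. The statistic is
\[
Z \;=\; \sum_{i=1}^n \bigl((X_i - Y_i)^2 - X_i - Y_i\bigr) \;=\; \sum_{i=1}^n \bigl(X_i(X_i-1) + Y_i(Y_i-1) - 2 X_i Y_i\bigr).
\]
The $-X_i - Y_i$ correction is included precisely to cancel the Poisson variance contribution so that $Z$ is unbiased for the scaled $\ell_2^2$ discrepancy. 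Indeed, using the factorial-moment identity $\mathbb{E}[X_i(X_i-1)] = (mp_i)^2$ (and analogously for $Y_i$, plus $\mathbb{E}[X_i Y_i] = m^2 p_i q_i$ by independence), one gets $\mathbb{E}[Z] = m^2 \|p-q\|_2^2$, so $\mathbb{E}[Z] = 0$ when $p = q$ and $\mathbb{E}[Z] > m^2 \epsilon^2$ in the alternative.

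The main technical step is the variance bound. By independence across bins, $\mathrm{Var}[Z] = \sum_i \mathrm{Var}[f_i]$ where $f_i$ is the bin-$i$ summand. A direct expansion of $\mathbb{E}[f_i^2]$ using the first four factorial moments of the Poisson distribution yields, after cancellations, the exact identity
\[
\mathrm{Var}[f_i] \;=\; 2 m^2 (p_i + q_i)^2 \;+\; 4 m^3 (p_i - q_i)^2 (p_i + q_i).
\]
Summing and using $\|p + q\|_2^2 \le 2(\|p\|_2^2 + \|q\|_2^2)$ together with $\sum_i (p_i - q_i)^2 (p_i + q_i) \le (\|p\|_\infty + \|q\|_\infty) \|p - q\|_2^2$ yields
\[
\mathrm{Var}[Z] \;=\; O\!\bigl(m^2 (\|p\|_2^2 + \|q\|_2^2) + m^3 (\|p\|_\infty + \|q\|_\infty) \|p - q\|_2^2\bigr).
\]

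The step that converts this into a bound depending on $b \ge \min(\|p\|_2, \|q\|_2)$ rather than the larger norm is the following. Assume without loss of generality $\|p\|_2 \le \|q\|_2$, so $\|p\|_2 \le b$. By the triangle inequality, $\|q\|_2 \le \|p\|_2 + \|p - q\|_2 \le b + \|p - q\|_2$; and since $\|\cdot\|_\infty \le \|\cdot\|_2$ holds for any real vector, both $\|p\|_\infty$ and $\|q\|_\infty$ are at most $b + \|p - q\|_2$. Substituting gives $\mathrm{Var}[Z] = O\!\bigl(m^2 b^2 + m^2 \|p - q\|_2^2 + m^3 b\, \|p - q\|_2^2 + m^3 \|p - q\|_2^3\bigr)$.

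I conclude with Chebyshev's inequality applied to the threshold $\tau = m^2 \epsilon^2 / 2$, accepting ``$p = q$'' when $Z < \tau$. Choosing $m = C b / \epsilon^2$ for a sufficiently large absolute constant $C$: in the null case $\mathrm{Var}[Z] = O(m^2 b^2) \ll \tau^2$, so $|Z| < \tau$ with probability at least $2/3$; in the alternative, $\mathbb{E}[Z] \ge m^2 \epsilon^2$ and each of the four variance terms above is at most $\mathbb{E}[Z]^2 / 100$, where controlling the cubic term $m^3 \|p - q\|_2^3$ uses the hypothesis $\epsilon < \sqrt{2} b$ to guarantee $m\,\|p - q\|_2 \ge m \epsilon = Cb/\epsilon \ge C/\sqrt{2}$. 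The hard part of the proof is therefore the exact variance computation for $f_i$ together with the norm-trading step, which is what makes the final sample bound scale with $\min(\|p\|_2, \|q\|_2)$ rather than the maximum.
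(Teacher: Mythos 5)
This lemma is imported from \cite{CDVV14} and the paper gives no proof of its own, so the only meaningful comparison is with the cited source: your reconstruction (Poissonized statistic $\sum_i ((X_i-Y_i)^2 - X_i - Y_i)$, the exact per-bin variance $2m^2(p_i+q_i)^2 + 4m^3(p_i-q_i)^2(p_i+q_i)$, the triangle-inequality trade to replace $\max(\|p\|_2,\|q\|_2)$ by $b + \|p-q\|_2$, and Chebyshev with the $\eps < \sqrt{2}b$ hypothesis controlling the cubic term) is exactly the CDVV14 argument and is correct.
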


\begin{proof}[Proof of Lemma~\ref{BkAlgLem}:]
We begin by presenting the algorithm:

\smallskip

\fbox{\parbox{6in}{
{\bf Algorithm} \texttt{Small-Support-Discrepancy-Tester}\\
Input: sample access to pdf's $p, q: [n] \to [0, 1]$, $k \in \Z_+$, and $\eps > 0$.\\
Output: ``YES'' if $q = p$; ``NO'' if $\|q-p\|_{1,k} \ge \eps.$


\begin{enumerate}

\item Let $m=\min(k^{2/3}/\epsilon^{4/3},k)$.

\item Let $S$ be the multiset obtained by taking $m$ independent samples from $p$.

\item Use the $\ell_2$ tester of Lemma~\ref{L2TestLem} to distinguish between the cases that 
$p_S=q_S$ and $\|p_S-q_S\|_2^2 \geq k^{-1}\epsilon^2/2$ and return the result.

\end{enumerate}
}}

\vspace{0.3cm}

The analysis is simple.
By Lemma~\ref{splitL2Lem}, with $90\%$ probability $\|p_S\|_2 = O(m^{-1/2})$,
and therefore the number of samples needed (using the $\ell_2$ tester from Lemma \ref{L2TestLem})
is $O(m+km^{-1/2}/\epsilon^{2}) = O(k^{2/3}/\epsilon^{4/3}+\sqrt{k}/\eps^2).$
If $p=q$, then $p_S=q_S$ and the algorithm will return ``YES''
with appropriate probability. If $\|q-p\|_{1,k} \ge \eps$, then $\|p_S-q_S\|_{1,k+m}\geq \epsilon$.
Since $k+m$ elements contribute to total $\ell_1$ error at least $\epsilon$,
by Cauchy-Schwarz, we have that $\|p_S-q_S\|_2^2 \geq \epsilon^2/(k+m) \geq k^{-1}\epsilon^2/2.$
Therefore, in this case, the algorithm returns ``NO'' with appropriate probability.
\end{proof}

\begin{proof} [Proof of Proposition \ref{algorithmProp}:]
The basic idea of our algorithm is the following: 
By Lemma \ref{BkAlgLem}, if $\|p-q\|_{\mathcal{A}_k}$ is large, then so is either $\|p-q\|_{1,k}$ or $\|p'-q'\|_{\mathcal{A}_k}$. 
Our algorithm then tests whether $\|p-q\|_{1,k}$ is large, and recursively tests whether $\|p'-q'\|_{\mathcal{A}_k}$ is large. 
Since $p',q'$ have half the support size, we will only need to do this for $\log(n/k)$ rounds, 
losing only a poly-logarithmic factor in the sample complexity.
We present the algorithm here:

\vspace{0.3cm}

\fbox{\parbox{6in}{
{\bf Algorithm} \texttt{Small-Domain-$\mathcal{A}_k$-tester}\\
Input: sample access to pdf's $p, q: [n] \to [0, 1]$, $k \in \Z_+$, and $\eps > 0$.\\
Output: ``YES'' if $q = p$; ``NO'' if $\|q-p\|_{\mathcal{A}_k} \ge \eps.$


\begin{enumerate}

\item For $i:=0$ to $t \eqdef \lceil \log_2(n/k)\rceil$, let $p^{(i)},q^{(i)}$
be distributions on $[\lceil 2^{-i}n \rceil]$ defined by
$p^{(i)}=\lceil 2^{-i}p \rceil$ and $q^{(i)}=\lceil 2^{-i}q \rceil$.

\item Take $Ck^{2/3}\log^{4/3}(3+n/k)\log\log(3+n/k)/\epsilon^{4/3}$ samples, for $C$ sufficiently large,
and use these samples to distinguish between the cases
$p^{(i)}=q^{(i)}$ and $\|p^{(i)}-q^{(i)}\|_{1,k} > \epsilon/(4 \log_2(3+n/k))$
with probability of error at most $1/(10 \log_2(3+n/k))$ for each $i$ from $0$ to $t$, using the same samples for each test.

\item If any test yields that $p^{(i)} \neq q^{(i)}$, return ``NO''. Otherwise, return ``YES''.

\end{enumerate}
}}

\vspace{0.3cm}

We now show correctness. In terms of sample complexity, we note that by taking a majority over $O(\log\log(3+n/k))$ independent
runs of the tester from Lemma \ref{BkAlgLem} we can run this algorithm with the stated sample complexity.
Taking a union bound, we can also assume that all tests performed in Step 2 returned the correct answer.
If $p=q$ then $p^{(i)}=q^{(i)}$ for all $i$ and thus, our algorithm returns ``YES''.
Otherwise, we have that $\|p-q\|_{\mathcal{A}_k} \geq \epsilon$.
By repeated application of Lemma \ref{smallErrorSupportLem}, we have that
$$
\|p-q\|_{\mathcal{A}_k} \leq \sum_{i=0}^{t-1} 2\|p^{(i)}-q^{(i)}\|_{1,k} + \|p^{(t)}-q^{(t)}\|_{\mathcal{A}_k} \leq 2 \sum_{i=0}^{t} \|p^{(i)}-q^{(i)}\|_{1,k} \;,
$$
where the last step was because $p^{(t)}$ and $q^{(t)}$ have a support of size at most $k$
and so $\|p^{(t)}-q^{(t)}\|_{\mathcal{A}_k} = \|p^{(t)}-q^{(t)}\|_1 = \|p^{(t)}-q^{(t)}\|_{1,k}$.
Therefore, if this is at least $\epsilon$, it must be the case that
$\|p^{(i)}-q^{(i)}\|_{1,k}> \epsilon/(4 \log_2(3+n/k))$ for some $0\leq i\leq t$, and thus our algorithm returns ``NO''.
This completes our proof.
\end{proof}

\subsection{Full Algorithm}

The improvement to Proposition \ref{algorithmProp} is somewhat technical. 
The key idea involves looking into the analysis of Lemma \ref{BkAlgLem}. 
Generally speaking, choosing a larger value of $m$ (up to the total sample complexity), 
will decrease the $\ell_2$ norm of $p$, and thus the final complexity. 
Unfortunately, taking $m>k$ might lead to problems as it will subdivide the $k$ original bins 
on which the error is supported into $\omega(k)$ bins. This in turn could worsen 
the lower bounds on $\|p-q\|_2$. However, this will only be the case if the total mass 
of these bins carrying the difference is large. Thus, we can obtain 
an improvement to Lemma \ref{BkAlgLem} when the mass of bins on which the error is supported is small.
This motivates the following definition:

\begin{definition}
For probability distributions $p, q$, an integer $k$ and real number $\alpha>0$, 
$d_{k,\alpha}(p,q)$ is the maximum over sets $T$ of size at most $k$ 
so that $p(i) \leq \alpha$ for all $i\in T$ of $\sum_{i\in T} |p(i)-q(i)|$.
\end{definition}

In other words, $d_{k,\alpha}(p,q)$ is the biggest $\ell_1$ difference between $p$ and $q$ 
coming from at most $k$ bins of mass at most $\alpha$. 
We have the following lemma:

\begin{lemma}\label{BkAlgLem2}
Let $p$ and $q$ be distributions on $[n]$.
Let $k>0$ be a positive integer, and $\epsilon,\alpha>0$.
There exists an algorithm which takes $O(k^{2/3}/\epsilon^{4/3}(1+m\alpha))$
samples from each of $p$ and $q$ and, with probability at least $2/3$,
distinguishes between the cases that $p=q$ and $d_{k,\alpha}(p,q)>\epsilon$.
\end{lemma}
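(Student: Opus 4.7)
The plan is to follow the blueprint of the warm-up algorithm in Lemma~\ref{BkAlgLem}, but refine the Cauchy-Schwarz step by exploiting the small-mass assumption. Specifically, I would set $m_0 \eqdef C_1 k^{2/3}/\epsilon^{4/3}$ and take $S$ to be a multiset of $m_0$ i.i.d.\ samples from $p$, then invoke the $\ell_2$-closeness tester of Lemma~\ref{L2TestLem} on $p_S$ and $q_S$ with detection threshold $\epsilon_2 \eqdef \epsilon/\sqrt{C_2 k(1+m_0\alpha)}$ for a sufficiently small constant $C_2$. The completeness case is immediate: if $p=q$, then $p_S = q_S$ by Fact~\ref{splitDistributionFactsLem}, and the $\ell_2$-tester returns YES.

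The main work is the soundness analysis. Assume $d_{k,\alpha}(p,q) > \epsilon$, and let $T$ be a witnessing set of at most $k$ indices with $p(i) \leq \alpha$ for all $i \in T$ and $\sum_{i \in T} |p(i)-q(i)| > \epsilon$. The key technical step is to bound the number of sub-bins in $p_S$ arising from $T$. Since index $i$ becomes $a_i = 1 + |S \cap \{i\}|$ sub-bins, the total number of sub-bins over $T$ is $\sum_{i \in T} a_i = |T| + |S \cap T| \leq k + |S \cap T|$. Because $p(T) \leq k\alpha$, we have $\mathbb{E}[|S \cap T|] \leq m_0 k\alpha$, and Markov's inequality yields $|S \cap T| \leq 10 m_0 k\alpha$ with probability at least $0.9$. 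Thus, with good probability the discrepancy is confined to $O(k(1+m_0\alpha))$ sub-bins, a dramatic refinement over the trivial bound $k+m_0$ used in Lemma~\ref{BkAlgLem}.

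Next, by the definition of the split distribution, the sub-bins associated with each $i \in T$ together contribute exactly $|p(i)-q(i)|$ to $\|p_S - q_S\|_1$, so the total $\ell_1$-discrepancy concentrated on the $T$-sub-bins still exceeds $\epsilon$. Cauchy-Schwarz on this support then gives
$$
\|p_S - q_S\|_2^2 \;>\; \frac{\epsilon^2}{O\!\left(k(1+m_0\alpha)\right)} \;\geq\; 2\epsilon_2^2 \;.
$$
Combined with Lemma~\ref{splitL2Lem}, which guarantees $\|p_S\|_2 = O(m_0^{-1/2})$ with probability at least $0.9$, the tester of Lemma~\ref{L2TestLem} succeeds using
$$
O\!\left(\frac{\|p_S\|_2}{\epsilon_2^2}\right) = O\!\left(\frac{k(1+m_0\alpha)}{m_0^{1/2}\,\epsilon^2}\right) = O\!\left(\frac{k^{2/3}}{\epsilon^{4/3}}(1+m_0\alpha)\right)
$$
samples, matching the claimed bound once we add the $m_0 = O(k^{2/3}/\epsilon^{4/3})$ samples used to form $S$.

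The main subtlety is to make all high-probability events hold simultaneously with constant overall success: the bound $\|p_S\|_2 = O(m_0^{-1/2})$, the bound $|S \cap T| = O(m_0 k\alpha)$, and the correctness of the $\ell_2$-tester. These are three independent events each of probability at least $0.9$ (or can be boosted to that by standard repetition), so a union bound preserves constant success. A second minor point is that $T$ is not known to the algorithm, but that is irrelevant since the analysis only needs the \emph{existence} of such a $T$ to guarantee the requisite lower bound on $\|p_S-q_S\|_2$, and the $\ell_2$-tester operates on the full domain.
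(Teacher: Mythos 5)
Your proposal is correct and follows essentially the same route as the paper's proof: split $p,q$ using $m \approx k^{2/3}/\epsilon^{4/3}$ samples from $p$, observe that since the witnessing set $T$ has mass at most $k\alpha$ only $O(mk\alpha)$ samples subdivide it, apply Cauchy--Schwarz over the resulting $O(k(1+m\alpha))$ sub-bins, and invoke the $\ell_2$ tester with $\|p_S\|_2 = O(m^{-1/2})$. The only nit is the direction of your constant: for the displayed inequality $\|p_S-q_S\|_2^2 \geq 2\epsilon_2^2$ to hold you need the constant $C_2$ in the threshold $\epsilon_2 = \epsilon/\sqrt{C_2 k(1+m_0\alpha)}$ to be sufficiently \emph{large} (i.e., the threshold sufficiently small), not small as stated.
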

\begin{proof}
The algorithm and its analysis are nearly identical to that of Lemma \ref{BkAlgLem}.
We include them here for completeness:

\smallskip

\fbox{\parbox{6in}{
{\bf Algorithm} \texttt{Small-Support-Discrepancy-Tester}\\
Input: sample access to pdf's $p, q: [n] \to [0, 1]$, $k \in \Z_+$, and $\eps > 0$ with $\|p\|_2\leq \alpha$.\\
Output: ``YES'' if $q = p$; ``NO'' if $\|q-p\|_{1,k} \ge \eps.$


\begin{enumerate}

\item Let $m=k^{2/3}/\epsilon^{4/3}$.

\item Let $S$ be the multiset obtained by taking $m$ independent samples from $p$.

\item Use the $\ell_2$ tester of Lemma~\ref{L2TestLem} to distinguish between the cases $p_S=q_S$
and $\|p_S-q_S\|_2^2 \geq k^{-1}\epsilon^2/(1+O(\alpha m /\sqrt{k}))$ and return the result.

\end{enumerate}
}}

\vspace{0.3cm}

The analysis is quite simple. Firstly, we can assume that $\|p_S\|_2^2 = O(1/m)$ as this happens with $90\%$ probability over the choice of $S$. 
Next, let $T$ be the set of size at most $k$ such that $d_{k,\alpha}(p,q) = \sum_{i\in T} |p(i)-q(i)|$. 
With $90\%$ probability over the choice of $S$, we have that only $O(mk\alpha)$ elements from $S$ land in $T$. 
Assuming this is the case, it is sufficient to distinguish between $p_S=q_S$ and 
$\|p_S-q_S\|_2^2 \geq k^{-1}\epsilon^2/(1+O(\alpha m ))$, 
which can be done in $O(k\epsilon^{-2}(1+O(\alpha m /\sqrt{k}))/\sqrt{m}) = O(k^{2/3}\epsilon^{-4/3}(1+O(\alpha m )))$ samples. 
This completes the proof.
\end{proof}

We are now prepared to prove Theorem \ref{thm:main-alg}. The basic idea 
behind the improvement is that we want to avoid merging heavy bins. 
We do this by first taking a large set of elements and 
defining the $p^{(i)}$ in a way that doesn't involve merging elements of these sets.

\begin{proof}
We first note that given the algorithm from \cite{DKN:15:FOCS}, 
it suffices to provide an algorithm when $\eps > k^{-3/8}$ and $n \leq 2^k$.

Our algorithm is the following:

\vspace{0.3cm}

\fbox{\parbox{6in}{
{\bf Algorithm} \texttt{Small-Domain-$\mathcal{A}_k$-tester}\\
Input: sample access to pdf's $p, q: [n] \to [0, 1]$, $k \in \Z_+$, and $\eps > 0$.\\
Output: ``YES'' if $q = p$; ``NO'' if $\|q-p\|_{\mathcal{A}_k} \ge \eps.$


\begin{enumerate}

\item Let $m=k^{2/3}\log^{4/3}(3+n/k)/\eps^{4/3}.$ Let $C$ be a sufficiently large constant.

\item Let $S$ be a set of $Cm\log(k)$ independent samples from $p$.

\item Let $p^{(0)}=p$ and $q^{(0)}=q$. For $i:=1$ to $t \eqdef \lceil \log_2(n/k)\rceil$, define distributions $p^{(i)},q^{(i)}$ inductively as follows:
\begin{enumerate}
\item $p^{(i)}$ will be a flattening of $p$ by merging all bins in certain dyadic intervals (i.e., intervals of the form $[a\cdot 2^b+1,(a+1)2^b]$).
\item $p^{(i+1)}$ is obtained from $p^{(i)}$ by merging any pair of adjacent bins of $p^{(i)}$ that correspond to intervals 
$[a 2^{i+1}+1,a 2^{i+1}+2^i]$ and $[a2^{i+1}+2^i+1,(a+1)2^i]$ where neither of these subintervals contains a point of $S$.
\item $q^{(i+1)}$ is obtained by merging bins in a similar way.
\end{enumerate}

\item Take $Cm\log\log(3+n/k)$ samples,
and use these samples to distinguish between the cases
$p^{(i)}=q^{(i)}$ and $d_{k,1/m}(p^{(i)}-q^{(i)}) > \epsilon/(8 \log_2(3+n/k))$
with probability of error at most $1/(10 \log_2(3+n/k))$ for each $i$ from $0$ to $t$, 
using the same samples for each test.

\item If any test yields that $p^{(i)} \neq q^{(i)}$, return ``NO''.

\item Otherwise, test if $p^{(t)}=q^{(t)}$ of $\|p^{(t)}-q^{(t)}\|_{\mathcal{A}_k} > \eps/2$ 
using the algorithm from Proposition \ref{algorithmProp} and return the answer.

\end{enumerate}
}}

\vspace{0.3cm}

We now proceed with the analysis. Firstly, we note that the bins of $p^{(t)}$ corresponds to a dyadic interval 
either containing an element of $S$ or adjacent to such an element. 
Therefore, the domain of $p^{(t)}$ is at most $O(t|S|) = \mathrm{poly}(k)$.

We also note that the sample complexity of
\begin{align*}
O(m\log(k))&+O(k^{2/3}\log^{4/3}(3+n/k)\log\log(3+n/k)/\eps^{4/3})\\ 
& + O((k^{2/3}\log^{4/3}(k)\log\log(k)/\eps^{4/3}+\sqrt{k}\log^2(k)\log\log(k)/\eps^2)),
\end{align*}
which is sufficient.

We now proceed to prove correctness. 
For completeness, if $p=q$, it is easy to see that $p^{(i)}=q^{(i)}$ for all $i$, 
and thus, by a union bound, we pass every test and our algorithm returns ``YES'' with $2/3$ probability. 

It remains to consider the soundness case, i.e., the case where $\|p-q\|_{\mathcal{A}_k} > \eps$.
In this case, let $\mathcal{I}=\{I_i\}_{1\leq i \leq k}$ be a partition of $[n]$ into intervals 
so that $\sum_{i=1}^k |p(I_i)-q(I_i)| > \eps$. We claim that with high probability over the choice of $S$ 
every dyadic interval that has mass (under $p$) at least $1/m$ and contains an endpoint of some $I_i$ 
also contains an element of $S$. To prove this, we note that the $I_i$ contain only $O(k)$ endpoints, 
and each endpoint is contained in a unique minimal dyadic interval of mass at least $1/m$. 
It suffices to show that each of these $O(k)$ intervals of mass at least $1/m$ contains a point in $S$, 
but this follows easily by a union bound. Henceforth, we will assume that the $S$ we chose has this property.

Let $\mathcal{I}^{(i)}$ be a partition of the bins for $p^{(i)}$ and $q^{(i)}$ defined inductively 
by $\mathcal{I}^{(0)}=\mathcal{I}$ and $\mathcal{I}^{(i+1)}$ is obtained from $\mathcal{I}^{(i)}$ 
by flattening it and assigning new bins that partially overlap two of the intervals in $\mathcal{I}^{(i)}$ arbitrarily 
to one of the two corresponding intervals in $\mathcal{I}^{(i+1)}$.

We note that
$$
\left|\sum_{I\in \mathcal{I}^{(i)}}|p^{(i)}(I)-q^{(i)}(I)|- \sum_{I\in \mathcal{I}^{(i+1)}}|p^{(i+1)}(I)-q^{(i+1)}(I)|\right|
$$
is at most twice a sum over $k$ bins $b$, 
not containing an element of $S$ of $|p^{(i)}(b)-q^{(i)}(b)|$. This in turn is at most $2d_{k,1/m}(p^{(i)},q^{(i)})$. 
Inducting, we have that
$$
\|p-q\|_{\mathcal{A}_k} \leq 2\sum_{i=0}^{t-1}d_{k,1/m}(p^{(i)},q^{(i)}) + \|p^{(t)}-q^{(t)}\|_{\mathcal{A}_k}.
$$
Therefore, if $\|p-q\|_{\mathcal{A}_k}>\eps$, we have that either $d_{k,1/m}(p^{(i)}-q^{(i)}) > \epsilon/(8 \log_2(3+n/k))$ for some $i$, 
or $\|p^{(t)}-q^{(t)}\|_{\mathcal{A}_k}>\eps/2$. In either case, with probability at least $2/3$, our algorithm will detect this and reject.
This completes the proof.
\end{proof}

\section{Nearly Matching Information-Theoretic Lower Bound}

In this section, we prove a nearly matching sample lower bound. 
We first show a slightly easier lower bound that holds even for distributions that are piecewise constant on a few pieces, 
and then modify it to obtain the stronger general bound for testing closeness in $\mathcal{A}_k$ distance.

\subsection{Lower Bound for $k$-Histograms}

We begin with a lower bound for $k$-histograms ($k$-flat distributions). 
Before moving to the discrete setting, we first establish a lower bound 
for continuous histogram distributions.
Our bound on discrete distributions will follow
from taking the adversarial distribution from this example
and rounding its values to the nearest integer.
In order for this to work, we will need ensure to that our adversarial
distribution does not have its $\mathcal{A}_k$-distance decrease
by too much when we apply this operation.
To satisfy this requirement, we will guarantee
that our distributions will be piecewise constant with all the pieces of length at least $1$.

\begin{proposition} \label{prop:lb}
Let $k \in \Z_+$, $\epsilon>0$ sufficiently small, and $W>2$ .
Fix $$m=\min(k^{2/3}\log^{1/3}(W)/\epsilon^{4/3}, k^{4/5}/\epsilon^{6/5}) \;.$$
There exist distributions $\mathcal{D},\mathcal{D'}$ over pairs of distributions
$p$ and $q$ on $[0, 2(m+k)W]$, where $p$ and $q$ are $O(m+k)$-flat
with pieces of length at least $1$, so that: (a) when drawn from $\mathcal{D}$,
we have $p=q$ deterministically, (b) when drawn from $\mathcal{D'}$, we have 
$\|p-q\|_{\mathcal{A}_k}>\epsilon$ with $90\%$ probability,
and so that $o(m)$ samples are insufficient to distinguish whether
or not the pair is drawn from $\mathcal{D}$ or $\mathcal{D'}$
with better than $2/3$ probability.
\end{proposition}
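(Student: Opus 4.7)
The plan is to construct two distributions $\mathcal{D}, \mathcal{D}'$ over random pairs $(p, q)$ on $[0, 2(m+k)W]$, both piecewise constant on $O(m+k)$ pieces of length $\ge 1$, so that $p = q$ under $\mathcal{D}$ while $\|p-q\|_{\mathcal{A}_k} \ge \epsilon$ with high probability under $\mathcal{D}'$, and then prove via Poissonization plus a mutual-information bound that $o(m)$ samples cannot distinguish the two families. The construction partitions $[0, 2(m+k)W]$ into $\Theta(m)$ \emph{noise bins}, each carrying $p$- and $q$-mass $\Theta(1/m)$, and $\Theta(k)$ \emph{signal bins}, each carrying $p$- and $q$-mass $\Theta(\epsilon/k)$. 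Under $\mathcal{D}$ both $p$ and $q$ are uniform on every bin, so $p=q$. Under $\mathcal{D}'$ the noise bins are unchanged, while in each signal bin $p$ is uniform on one half and $q$ on the other half, with the orientation chosen by an independent fair coin; this gives $\|p-q\|_{\mathcal{A}_k} \ge \epsilon$ using the partition that splits $k/2$ signal bins into their halves. The crucial feature of the construction is that the bin widths $L_i$ are i.i.d.\ with $\log L_i$ uniform on $[0, \log W]$ (truncated so every half-bin has length at least $1$); by a Chernoff bound the total length is at most $2(m+k)W$ with probability $1 - o(1)$.

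Having fixed the construction, I would Poissonize by drawing $\mathrm{Poisson}(m)$ samples from each of $p$ and $q$; samples inside distinct bins are then conditionally independent given the $L_i$'s and orientations, so the KL (equivalently $\chi^2$) divergence between the sample laws of $\mathcal{D}$ and $\mathcal{D}'$ decomposes as a sum of per-bin contributions, which it then suffices to bound by $o(1)$. Noise bins contribute $0$. For a signal bin, with $\lambda = m\epsilon/k$ expected per-distribution samples, I split by count profile: bins with $\le 1$ total sample contribute $0$ by the orientation symmetry; bins with $\ge 3$ total samples can be distinguished through the orderings of labeled positions and yield the second threshold $m \lesssim k^{4/5}/\epsilon^{6/5}$ via a $\chi^2$ computation essentially identical to the one in prior work~\cite{DKN:15:FOCS}, in which each $3$-collision contributes only in proportion to the imbalance of its label split.

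The main obstacle, and the reason the $\log^{1/3}(W)$ factor appears, is the 2-collision contribution. If the bin widths were known to the tester, each of the $\Theta(k\lambda^2)$ 2-sample bins would give constant information from whether the spacing of the two positions is smaller than $L/2$, forcing only the weaker bound $m \lesssim \sqrt{k}/\epsilon$. The purpose of the exponential-scale randomization of the $L_i$ is exactly to destroy any reference scale the tester might exploit: after integrating out $\log L$ over an interval of length $\log W$, the marginal joint density of a pair of positions in a bin is essentially scale-invariant, and the two candidate laws ($\mathcal{D}$ and $\mathcal{D}'$) differ by a multiplicative factor of at most $1 + O(1/\log W)$. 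The heart of the proof is this explicit scale-marginalization computation; once it is in hand, the per-bin $\chi^2$ is damped by roughly $\log W$, and summing over signal bins shows the total $2$-collision contribution is $o(1)$ whenever $m \ll k^{2/3}\log^{1/3}(W)/\epsilon^{4/3}$, giving the first term in the stated bound. Piecewise-constancy on $O(m+k)$ pieces, the minimum-length condition, and the domain bound all follow directly from the construction, completing the proposition.
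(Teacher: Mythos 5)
Your high-level plan (two types of bins, Poissonization, a per-bin information/$\chi^2$ decomposition, triviality of $0$--$1$ sample bins, a triple-collision term giving $k^{4/5}/\eps^{6/5}$, and exponential scale-randomization to damp the two-collision term) is the same as the paper's, but two concrete steps in your construction do not work as stated. First, you build the domain by concatenating bins of i.i.d.\ random widths $L_i$, so the absolute positions of samples in different bins are coupled through the random partial sums $\sum_{j<i}L_j$; given only the hypothesis, the samples from different bins are \emph{not} conditionally independent, so the decomposition of the divergence into per-bin contributions is unjustified. If you repair it by conditioning on (i.e., revealing) the widths $L_i$, the decomposition is restored but the scale-hiding is destroyed: as you yourself observe, with known widths each two-sample signal bin yields constant information and you only get $m\lesssim \sqrt{k}/\eps$. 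The paper avoids this tension by keeping the ambient bins fixed (length $2W$, known boundaries, so per-bin conditional independence given the hypothesis is genuine) and putting the randomness \emph{inside} each bin: an interval of random length $\ell_i\le 2W^{2/3}$ with $\log(\ell_i/2)$ uniform, placed at a \emph{random offset}; the offset makes absolute positions nearly uninformative (an $O(W^{-1/3})$ term) and only the scale-randomized spacing matters.

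Second, your key quantitative claim for two-collisions is wrong as stated, and even its corrected form does not give the stated threshold with your construction. After marginalizing the scale, the two conditional laws do \emph{not} differ by a pointwise factor $1+O(1/\log W)$: under $\mathcal{D}'$ two same-label points can never be farther apart than half the bin, while under $\mathcal{D}$ they can, so the likelihood ratio is $0/\infty$ on a set of constant probability; what is true (and what the paper proves) is only a total-variation bound $O(1/\log W)$. To convert a TV bound into an $o(1)$ information bound at sample size $m\approx k^{2/3}\log^{1/3}(W)/\eps^{4/3}$ one needs the heavy ``noise'' mass to sit \emph{in the same bins} with the same positional law as the signal (in the paper, each bin is independently of noise type with probability $m/(m+k)$, and conditioned on two samples the unlabeled positions are identically distributed for the two types), so that the noise probability appears in the denominator of each $\chi^2$ term. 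With your fixed partition into separate noise bins and signal bins this masking is absent: the two-collision contribution is then of order $m^2\eps^2/(k\log W)$ (forcing only $m\lesssim\sqrt{k\log W}/\eps$), and the triple-collision contribution degrades to order $m^3\eps^3/k^2$ (forcing only $m\lesssim k^{2/3}/\eps$), both short of the claimed $\min(k^{2/3}\log^{1/3}(W)/\eps^{4/3},\,k^{4/5}/\eps^{6/5})$. So beyond the independence issue, the bin-type mixing inside each bin is an essential ingredient your sketch is missing.
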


At a high-level, our lower bound construction proceeds as follows: 
We will divide our domain into $m+k$ bins
so that no information about which distributions had samples drawn
from a given bin or the ordering of these samples will help to distinguish
between the cases of $p=q$ and otherwise,
unless at least three samples are taken from the bin in question.
Approximately $k$ of these bins will each have mass $\epsilon/k$
and might convey this information if at least three samples are taken from the bin.
However, the other $m$ bins will each have mass approximately $1/m$
and will be used to add noise. In all, if we take $s$ samples,
we expect to see approximately $s^3\epsilon^3/k^2$ of the lighter
bins with at least three samples.
However, we will see approximately $s^3/m^2$ of our heavy bins with three samples.
In order for the signal to overwhelm the noise, 
we will need to ensure that we have $(s^3\epsilon^3/k^2)^2 > s^3/m^2$.

The above intuitive sketch assumes that we cannot obtain information
from the bins in which only two samples are drawn.
This naively should not be the case.
If $p=q$, the distance between two samples drawn from that bin will be independent
of whether or not they are drawn from the same distribution.
However, if $p$ and $q$ are supported on disjoint intervals,
one would expect that points that are close to each other should be
far more likely to be drawn from the same distribution than from different distributions.
In order to disguise this, we will scale the length of the intervals by a random, exponential amount,
essentially making it impossible to determine what is meant by two points being close to each other.
In effect, this will imply that two points drawn from the same bin will only reveal $O(1/\log(W))$
bits of information about whether $p=q$ or not.
Thus, in order for this information to be sufficient, 
we will need that $(s^2\epsilon^2/k)^2 /\log(W) > (s^2/m)$.
We proceed with the formal proof below.

\begin{proof}[Proof of Proposition~\ref{prop:lb}:]
We use ideas from~\cite{DK16} to obtain this lower bound 
using an information theoretic argument.

We may assume that $\epsilon >k^{1/2}$,
because otherwise we may employ the standard lower bound that $\Omega(\sqrt{k}/\epsilon^2)$
samples are required to distinguish two distributions on a support of size $k$.

First, we note that it is sufficient to take $\mathcal{D}$ and $\mathcal{D'}$ be 
distributions over pairs of non-negative, piecewise constant distributions
with total mass $\Theta(1)$ with $90\%$ probability
so that running a Poisson process with parameter $o(m)$
is insufficient to distinguish a pair from $\mathcal{D}$ from a pair from $\mathcal{D'}$~\cite{DK16}.

We construct these distributions as follows:
We divide the domain into $m+k$ bins of length $2W$.
For each bin $i$, we independently generate a random $\ell_i$,
so that $\log(\ell_i/2)$ is uniformly distributed over $[0,2\log(W)/3]$.
We then produce an interval $I_i$ within bin $i$ of total length $\ell_i$
and with random offset. In all cases, we will have $p$ and $q$ supported
on the union of the $I_i$'s.

For each $i$ with probability $m/(m+k)$, we have the restrictions of $p$ and $q$ to $I_i$
both uniform with $p(I_i)=q(I_i)=1/m$.
The other $k/(m+k)$ of the time we have $p(I_i)=q(I_i)=\epsilon/k$.
In this latter case, if $p$ and $q$ are being drawn from $\mathcal{D}$,
$p$ and $q$ are each constant on this interval.
If they are being drawn from $\mathcal{D'}$,
then $p+q$ will be constant on the interval,
with all of that mass coming from $p$
on a random half and coming from $q$ on the other half.

Note that in all cases $p$ and $q$ are piecewise constant
with $O(m+k)$ pieces of length at least $1$.
It is easy to show that with high probability
the total mass of each of $p$ and $q$ is $\Theta(1)$,
and that if drawn from $\mathcal{D'}$
that $\|p-q\|_{\mathcal{A}_k}\gg \epsilon$ with at least $90\%$ probability.

We will now show that if one is given $m$ samples from each of $p$ and $q$,
taken randomly from either $\mathcal{D}$ or $\mathcal{D'}$,
that the shared information between the samples and the source family will be small.
This implies that one is unable to consistently guess whether our pair was taken from $\mathcal{D}$ or $\mathcal{D'}$.

Let $X$ be a random variable that is uniformly at random
either $0$ or $1$. Let $A$ be obtained by applying
a Poisson process with parameter $s=o(m)$
on the pair of distributions $p,q$ drawn from $\mathcal{D}$
if $X=0$ or from $\mathcal{D'}$ if $X=1$.
We note that it suffices to show that the shared information $I(X:A)=o(1)$.
In particular, by Fano's inequality, we have:
\begin{lemma}\label{informationTheoryLem}
If $X$ is a uniform random bit and $A$ is a correlated random variable,
then if $f$ is any function so that $f(A)=X$ with at least $51\%$ probability,
then $I(X:A)\geq 2\cdot 10^{-4}$.
\end{lemma}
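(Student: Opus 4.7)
The plan is to reduce the lemma to a textbook application of Fano's inequality in the binary case. First I would set $\hat{X} := f(A)$ and $P_e := \Pr[\hat{X} \neq X]$; by hypothesis $P_e \leq 0.49$. Since $X$ takes only two values, Fano's inequality collapses to
$$H(X \mid A) \;\leq\; H_2(P_e), \qquad H_2(t) := -t\log_2 t - (1-t)\log_2(1-t),$$
because the usual term $P_e \cdot \log_2(|\mathcal{X}|-1)$ vanishes when $|\mathcal{X}|=2$. Monotonicity of $H_2$ on $[0,1/2]$ then gives $H(X \mid A) \leq H_2(0.49)$.

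Since $X$ is a uniform bit, $H(X) = 1$, so
$$I(X : A) \;=\; H(X) - H(X \mid A) \;\geq\; 1 - H_2(0.49).$$
The conclusion then follows from the numerical bound $1 - H_2(0.49) \geq 2 \cdot 10^{-4}$. To certify this rigorously I would use a Taylor expansion of $H_2$ about $1/2$: since $H_2'(1/2)=0$ and $H_2''(p) = -1/(p(1-p)\ln 2)$ satisfies $|H_2''(p)| \geq 4/\ln 2$ on $[0.49,0.51]$, Taylor's theorem with remainder yields
$$1 - H_2\!\left(\tfrac{1}{2} - \delta\right) \;\geq\; \frac{2\delta^2}{\ln 2} \quad \text{for } \delta \in [0, 0.01].$$
Plugging in $\delta = 10^{-2}$ gives $1 - H_2(0.49) \geq (2 \cdot 10^{-4})/\ln 2 \approx 2.88 \cdot 10^{-4}$, which is safely above $2 \cdot 10^{-4}$.

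There is essentially no substantive obstacle: the whole argument is a direct invocation of Fano's inequality in the simplest nontrivial setting. The only minor care required is in the numerical step, where the constant $2 \cdot 10^{-4}$ has little slack against the actual value $\approx 2.88 \cdot 10^{-4}$; one should either display the explicit values of $\log_2(0.49)$ and $\log_2(0.51)$, or appeal to the quadratic lower bound above via the uniform control on $H_2''$ near $1/2$, to make the inequality fully rigorous.
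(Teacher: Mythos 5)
Your proposal is correct and follows essentially the same route as the paper, which simply invokes Fano's inequality at this point without spelling out the details; your binary Fano step $I(X:A) \ge 1 - H_2(0.49)$ together with the quadratic bound $1-H_2(\tfrac12-\delta)\ge 2\delta^2/\ln 2$ is exactly the intended argument, and the numerics ($1-H_2(0.49)\approx 2.9\cdot 10^{-4}$ bits, about $2.0\cdot 10^{-4}$ nats, which is presumably where the paper's constant comes from) check out.
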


Let $A_i$ be the samples of $A$ taken from the $i^{th}$ bin.
Note that the $A_i$ are conditionally independent on $X$.
Therefore, we have that
$
I(X:A) \leq \sum_i I(X:A_i) = (m+k)I(X:A_1) \;.
$
We will proceed to bound $I(X:A_1)$.

We note that $I(X:A_1)$ is at most the integral
over pairs of multisets $a$ (representing a set of samples
from $q$ and a set of samples from $p$),  of
$$
O\left(\frac{(\Pr(A_1=a|X=0)-\Pr(A_1=a|X=1))^2}{\Pr(A_1=a)} \right).
$$
Thus,
$$
I(X:A_1)=\sum_{h=0}^\infty \int_{|a|=h}O\left(\frac{(\Pr(A_1=a|X=0)-\Pr(A_1=a|X=1))^2}{\Pr(A_1=a)} \right).
$$
We will split this sum up based on the value of $h$.

For $h=0$, we note that the distributions for $p+q$
are the same for $X=0$ and $X=1$. Therefore,
the probability of selecting no samples is the same.
Therefore, this contributes $0$ to the sum.

For $h=1$, we note that the distributions for $p+q$
are the same in both cases, and conditioning on
$I_1$ and $(p+q)(I_1)$ that $\E[p]$ and $\E[q]$
are the same in each of the cases $X=0$ and $X=1$.
Therefore, again in this case, we have no contribution.

For $h\geq 3$, we note that
$
I(X:A_1) \leq I(X:A_1,I_1) \leq I(X:A_1|I_1) \;,
$
since $I_1$ is independent of $X$.
We note that $\Pr(A_1=a|X=0,p(I_1)=1/m)=\Pr(A_1=a|X=1,p(I_1)=1/m)$.
Therefore, we have that
$$
\Pr(A_1=a|X=0)-\Pr(A_1=a|X=1) = \Pr(A_1=a|X=0,p(I_1)=\epsilon/k)-\Pr(A_1=a|X=1,p(I_1)=\epsilon/k).
$$
If $p(I_1)=\epsilon/k$, the probability that exactly $h$ elements
are selected in this bin is at most $k/(m+k)(2s\epsilon/k)^h/h!$,
and if they are selected, they are uniformly distributed in $I_1$
(although which of the sets $p$ and $q$ they are taken from is non-uniform).
However, the probability that $h$ elements are taken from $I_1$ is at least
$\Omega(m/(m+k) (sm)^{-h}/h!)$ from the case where $p(I_1)=1/m$,
and in this case the elements are uniformly distributed
in $I_1$ and uniformly from each of $p$ and $q$.
Therefore, we have that this contribution to our shared information is at most
$
k^2/(m(m+k)) O(s \epsilon^2 m/k^2)^h/h! \;.
$
We note that $\epsilon^2 m / k^2 < 1$.
Therefore, the sum of this over all $h\geq 3$ is
$
k^2/(m(m+k)) O(s\epsilon^2 m/k^2)^3.
$
Summing over all $m+k$ bins, this is
$
k^{-4}\epsilon^6 s^3m^2 = o(1).
$

It remains to analyze the case where $h=2$.
Once again, we have that ignoring which of $p$ and $q$
elements of $A_1$ came from, $A_1$
is identically distributed conditioned on $p(I_1)=1/m$ and $|A_1|=2$
as it is conditioned on $p(I_1)=\epsilon/k$ and $|A_1|=2$.
Since once again, the distributions $\mathcal{D}$ and $\mathcal{D'}$
are indistinguishable in the former case, we have that the contribution
of the $h=2$ terms to the shared information is at most
$$
O\left(\frac{(k/(k+m) (\epsilon s/k)^2)^2}{m/(k+m)(s/m)^2} \right)\dtv((A_1|X=0,p(I_1)\epsilon/k,|A_1|=2),(A_1|X=1,p(I_1)=\epsilon/k,|A_1|=2))
$$
or
$$
O\left(s^2mk^{-2}\epsilon^4/(k+m) \right)\dtv((A_1|X=0,p(I_1)=\epsilon/k,|A_1|=2),(A_1|X=1,p(I_1)=\epsilon/k,|A_1|=2)) \;.
$$
It will suffice to show that conditioned
upon $p(I_1)=\epsilon/k$ and $|A_1|=2$ that $$\dtv((A_1|X=0),(A_1|X=1))=O(1/\log(W)).$$
Let $f$ be the order preserving linear function from $[0,2]$ to $I_1$.
Notice that conditional on $|A_1|=2$ and $p(I_1)=\epsilon/k$
that we may sample from $A_1$ as follows:
\begin{itemize}
\item Pick two points $x>y$ uniformly at random from $[0,2]$.
\item Assign the points to $p$ and $q$ as follows:
\begin{itemize}
\item If $X=0$ uniformly randomly assign these points to either distribution $p$ or $q$.
\item If $X=1$ randomly do either:
\begin{itemize}
\item Assign points in $[0,1]$ to $q$ and other points to $p$.
\item Assign points in $[0,1]$ to $p$ and other points to $q$.
\end{itemize}
\end{itemize}
\item Randomly pick $I_1$ and apply $f$ to $x$ and $y$ to get outputs $z=f(x),w=f(y)$.
\end{itemize}

Notice that the four cases: (i) both points coming from $p$,
(ii) both points coming from $q$, (iii) a point from $p$ preceding a point from $q$,
(iv) a point from $q$ preceding a point from $p$,
are all equally likely conditioned on either $X=0$ or $X=1$.
However, we will note that this ordering is no longer independent of the choice of $x$ and $y$.

Therefore, we can sample from $A_1$ subject to $X=0$ and
from $A_1$ subject to $X=1$ in such a way that this ordering
is the same deterministically. We consider running the above
sampling algorithm to select $(x,y)$ while sampling
from $X=0$ and $(x',y')$ when sampling from $X=1$
so that we are in the same one of the above four cases. We note that
$$
\dtv((A_1|X=0),(A_1|X=1)) \leq \E_{x,y,x',y'}[\dtv((f(x),f(y)),(f(x'),f(y')))] \;,
$$
where the variation distance is over the random choices of $f$.

To show that this is small, we note that $|f(x)-f(y)|$ is distributed like $\ell_1(x-y)$.
This means that $\log(|f(x)-f(y)|)$ is uniform over
$[\log(f(x)-f(y)),\log(f(x)-f(y))+2\log(W)/3]$.
Similarly, $\log(|f'(x')-f'(y')|)$ is uniform over $[\log(f(x')-f(y')),\log(f(x')-f(y'))+2\log(W)/3]$.
These differ in total variation distance by
$$
O\left(\frac{|\log(f(x)-f(y))|+|\log(f(x')-f(y'))|}{\log(W)}\right) \;.
$$
Taking the expectation over $x,y,x',y'$ we get $O(1/\log(W))$.
Therefore, we may further correlate the choices made
in selecting our two samples,
so that $z-w=z'-w'$ except with probability $O(1/\log(W))$.
We note that after conditioning on this, $z$ and $z'$
are both uniformly distributed over subintervals of $[0,2W]$
of length at least $2(W-W^{2/3})$. Therefore, the distributions
on $z$ and $z'$ differ by at most $O(W^{-1/3}).$
Hence, the total variation distance between $A_1$ conditioned
on $|A_1|=2,p(I_1)=\epsilon/k,X=0$
and conditioned on $|A_1|=2,p(I_1)=\epsilon/k,X=1$
is at most $O(1/\log(W))+O(W^{-1/3})=O(1/\log(W))$.
This completes our proof.
\end{proof}

We can now turn this into a lower bound
for testing $\mathcal{A}_k$ distance on discrete domains.


\begin{proof}[Proof of second half of Theorem~\ref{thm:main-lb}:]
Assume for sake of contradiction that this is not the case,
and that there exists a tester taking $o(m)$ samples.
We use this tester to come up with a continuous tester that violates Proposition~\ref{prop:lb}.

We begin by proving a few technical bounds on the parameters involved.
Firstly, note that we already have a lower bound of $\Omega(k^{1/2}/\epsilon^2)$,
so we may assume that this is much less than $m$.
We now claim that $m = O(\min(k^{2/3}\log^{1/3}(3+n/(m+k))/\epsilon^{4/3},k^{4/5}/\epsilon^{6/5}).$
If $m\leq k$, there is nothing to prove. Otherwise,
$$
k^{2/3}\log^{1/3}(3+n/(m+k))/\epsilon^{4/3} \geq m(m/k)^{-1/3} \log(3+n/(m+k))^{1/3}.
$$
Thus, there is nothing more to prove unless
$\log(3+n/(m+k))\gg m/k$.
But, in this case, $\log(3+n/(m+k)) \gg \log(m/k)$
and thus $\log(3+n/(m+k))=\Theta(\log(3+n/k))$, and we are done.

We now let $W=n/(6(m+k))$, and let $\mathcal{D}$ and $\mathcal{D'}$
be as specified in Proposition~\ref{prop:lb}.
We claim that we have a tester to distinguish a $p,q$ from $\mathcal{D}$
from ones taken from $\mathcal{D'}$ in $o(m)$ samples.
We do this as follows: By rounding $p$ and $q$ down to the nearest
third of an integer, we obtain $p'$,$q'$ supported on set of size $n$.
Since $p$ and $q$ were piecewise constant on pieces of size at least $1$,
it is not hard to see that $\|p'-q'\|_{\mathcal{A}_k} \geq \|p-q\|_{\mathcal{A}_k}/3.$
Therefore, a tester to distinguish $p'=q'$ from $\|p'-q'\|_{\mathcal{A}_k}\geq \epsilon$
can be used to distinguish $p=q$ from $\|p-q\|_{\mathcal{A}_k} \geq 3\epsilon.$
This is a contradiction and proves our lower bound.
\end{proof}

\subsection{The Stronger Lower Bound}

In order to improve on the bound from the last section, we will need to modify our previous construction in two ways both having to do with the contribution to the shared information coming from the case where two samples are taken from the same bin. The first is that we will need a different way of distinguishing between $\mathcal{D}$ and $\mathcal{D'}$ so that the variation distance between the distributions obtained from taking a pair of samples from the same bin is $O(1/\log^2(W))$ rather than $O(1/\log(W))$. After that, we will also need a better method of disguising these errors. In particular, in the current construction, most of the information coming from pairs of samples from the same bin occurs when the two samples are very close to each other (as when this happens in $\mathcal{D'}$, the samples usually don't come one from $p$ and the other from $q$). This is poorly disguised by noise coming from the heavier bins since these are not particularly likely to produce samples that are close. We can improve our way of disguising this by having different heavy bins to better mask this signal.

In order to solve the first of these problems, we will need the following construction:
\begin{lemma}\label{distanceHideLem}
Let $W$ be a sufficiently large integer. 
There exists a family $\mathcal{E}$ of pairs of distributions $p$ and $q$ on $[0,W]$ so that the following holds:

Firstly, $p$ and $q$ are deterministically supported on disjoint intervals, and thus have $\mathcal{A}_1$ distance $2$. 
Furthermore, let $\mathcal{E}_0$ be the family of pairs of distributions $p$ and $q$ on $[W]$ obtained by taking 
$(p',q')$ from $\mathcal{E}$ and letting $p=q=(p'+q')/2$. In other words, a sample from $\mathcal{E}_0$ 
can be thought of as taking a sample from $\mathcal{E}$ and then re-randomizing the label. 
Consider the distribution obtained by sampling $(p,q)$ from $\mathcal{E}$, and then taking 
two independent samples $x$ and $y$ from $(p+q)/2$. 
We let $\mathcal{E}^2$ be the induced distribution on $x$ and $y$ along with the labels 
of which of $p$ and $q$ each were taken from. Define $\mathcal{E}_0^2$ similarly, and note 
that it is equivalent to taking a sample from $\mathcal{E}^2$ and re-randomizing the labels. 
Then $\dtv(\mathcal{E}^2,\mathcal{E}_0^2)=O(1/\log^2(W)).$
\end{lemma}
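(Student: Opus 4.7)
The plan is to refine the construction from Proposition~\ref{prop:lb} by introducing a second, independent log-uniform random scale, so that the total variation distance bound is improved from $O(1/\log W)$ to $O(1/\log^2 W)$. Concretely, I propose taking $p$ uniform on $[0, L_1]$ and $q$ uniform on $[L_1, L_1 + L_2]$, where $\log L_1$ and $\log L_2$ are drawn independently and uniformly from $[0, c \log W]$ for a sufficiently small constant $c > 0$ (small enough that $L_1 + L_2 \le W$ with overwhelming probability). The supports of $p$ and $q$ are then disjoint intervals within $[0, W]$, yielding the required $\|p-q\|_1 = 2$.

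To bound $\dtv(\mathcal{E}^2, \mathcal{E}_0^2)$, first observe that the marginal distribution on the positions $(x, y)$ is identical in the two models (samples come from $(p+q)/2$ for the same distribution on $(p,q)$), so the total variation is governed entirely by the conditional distribution of labels given positions. Following the coupling strategy used in the proof of Proposition~\ref{prop:lb}, I would couple samples from $\mathcal{E}^2$ with those from $\mathcal{E}_0^2$ so that the four-way ``ordering type'' of the samples (both from $p$, both from $q$, one of each in either order) matches deterministically; this reduces the problem to bounding an expected TV distance between two corresponding position distributions on $[0, L_1 + L_2]^2$ parametrized by $(L_1, L_2)$.

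The main obstacle is to carry out this bound while leveraging both sources of randomness. The intuition is that $L_1$ controls the location of the boundary between the $p$- and $q$-regions while $L_2$ controls the overall extent of the $q$-region, so these two parameters blur essentially independent features of the label posterior. Adapting the analysis from Proposition~\ref{prop:lb} --- which converts a single log-uniform stretch into a $1/\log W$ variation-distance saving via a shift in the log of a sample-spacing --- and applying it to both $L_1$ and $L_2$ in succession should yield a multiplicative gain, producing a per-$(x,y)$ bound of order $O((|\log \Delta||\log \Delta'|)/\log^2 W)$ that averages to $O(1/\log^2 W)$ over the support. Edge effects from the finiteness of $[0,W]$ contribute a negligible $O(W^{-1/3})$ term, exactly as in the original proof, completing the argument.
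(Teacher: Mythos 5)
Your construction does not satisfy the lemma: with $p$ uniform on $[0,L_1]$ and $q$ uniform on $[L_1,L_1+L_2]$ for \emph{independent} log-uniform $L_1,L_2$, one has $\dtv(\mathcal{E}^2,\mathcal{E}_0^2)=\Omega(1)$, not $O(1/\log^2 W)$. The positions have the same law in both models, but the correlation between the labels and a coarse, scale-free statistic of the positions is detectable with constant advantage. Concretely, let $x\leq y$ be the two positions and consider the event $R=\{y\geq 2x \text{ and the two labels agree}\}$. Condition on $L_2\geq e^{10}L_1$, which happens with probability $1/2-O(1/\log W)$ since $\log L_1,\log L_2$ are independent and uniform on $[0,c\log W]$. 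In that regime, if both samples come from $p$ or both from $q$, then $(x,y)$ is essentially a pair of i.i.d.\ uniforms on a common interval, so $\Pr[y\geq 2x]\approx 1/2$; but if one sample comes from each, then $x\leq L_1$ while $y\geq L_1$ and typically $y\gg L_1$, so $\Pr[y\geq 2x]=1-O(L_1/L_2)$. Hence, conditioned on this scale regime, $R$ has probability about $\tfrac12\cdot\tfrac12=\tfrac14$ under $\mathcal{E}^2$, but about $\tfrac12\cdot\tfrac34=\tfrac38$ under $\mathcal{E}_0^2$ (where labels are independent of positions): a constant gap. The problem persists even if you symmetrize with a random translation and a random swap of $p$ and $q$: under ``different labels'' the spacing is $(1-U)L_1+U'L_2$, whose logarithm is distributed roughly like the \emph{maximum} of two independent uniforms on $[0,c\log W]$ (a triangular density), while under ``same labels'' it is governed by a single uniform scale (a flat density); these differ by constant total variation. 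Two independent stretches do not multiply the $1/\log W$ savings --- they destroy it, because the opposite-side spacing sees both scales while a same-side spacing sees only one. (A further, smaller defect: without a random center, a sample at position less than $1$ must carry label $p$, which by itself already forces TV $\Theta(1/\log W)$.)

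The paper's extra factor of $1/\log W$ comes from a different mechanism that your proposal is missing. There is one \emph{shared} scale $e^{\ell}$, a hidden random center $a$, and a random orientation $b$ (giving translation and $p\leftrightarrow q$ symmetry, so single samples and orderings are uninformative), and --- crucially --- $p$ and $q$ are \emph{not uniform} on their supports: each sample lies at distance $e^{\ell+\alpha}$ from $a$ with $\alpha$ uniform over an interval of length $\Theta(\log W)$, i.e., the radial law is log-uniform (density proportional to $1/|x-a|$). Coupling the radial parameters of a same-side pair with those of an opposite-side pair, the two spacings are $e^{\ell}|e^{\alpha}-e^{\beta}|$ versus $e^{\ell}(e^{\alpha}+e^{\beta})$, whose logarithms differ by $|\log\tanh(|\alpha-\beta|/2)|$, a quantity whose expectation is $O(1/\log W)$ precisely because of the log-uniform radial law; the shared scale $e^{\ell}$ then converts this expected log-shift into a total variation bound $O(1/\log W)\cdot O(1/\log W)=O(1/\log^2 W)$. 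So the second logarithm must come from reshaping the within-interval distributions around a common hidden center, not from giving the two intervals independent random lengths.
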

\begin{proof}
We note that it is enough to construct a family of continuous distributions $p$ and $q$ on $[0,W]$ 
so that deterministically $p$ and $q$ are supported on intervals separated by distance $2$, 
and so that the second condition above holds. By then rounding the values of $p$ and $q$ to the nearest integer, 
we obtain an appropriate discrete distribution.

The construction of $\mathcal{E}$ is straightforward. First, choose $a$ uniformly from $[W^{2/3},W-W^{2/3}]$, 
$\ell$ uniformly from $[0,\log(W)/3]$, and $b$ uniformly from $\{\pm 1\}$. To sample from $p$, 
take an $\alpha$ uniformly from $[0,\log(W)/3]$ and return $a+b e^\ell e^\alpha$. 
To sample from $q$, take an $\alpha$ uniformly from $[0,\log(W)/3]$ and return $a-b e^\ell e^\alpha$.

It is clear that $p$ and $q$ are supported on disjoint intervals of distance at least $2$. 
It remains to prove the more complicated claim.

Let $\mathcal{E}_s^2$ be the distribution obtained by picking a pair of distributions from $\mathcal{E}$ 
and then returning two independent samples from $p$. Let $\mathcal{E}_d^2$ be the distribution 
obtained by picking a pair of distributions from $\mathcal{E}$ and then returning independent 
samples from $p$ and $q$. We claim that $\dtv(\mathcal{E}^2,\mathcal{E}_0^2)=O(\dtv(\mathcal{E}_s^2,\mathcal{E}_d^2)).$ 
This is because if a sample from $\mathcal{E}^2$ has both points coming from $p$ or both from $q$, 
the points come from $\mathcal{E}_s^2$, whereas if one point comes from each, 
the points come from $\mathcal{E}_d^2$. On the other hand, in any of these cases, 
a pair of samples from $\mathcal{E}_0^2$ comes from $(\mathcal{E}_s^2+\mathcal{E}_d^2)/2$.

Let $(x,y)$ be a sample from $\mathcal{E}_s^2$ and $(w,z)$ a sample from $\mathcal{E}_d^2$. 
We claim that $\dtv((x,y),(w,z))\leq \dtv(x-y,w-z)+O(W^{-1/3})$. 
This is because of the averaging over $a$ in the definition of $\mathcal{E}$. 
In particular, consider the following mechanism for taking a sample from $\mathcal{E}_s^2$ or $\mathcal{E}_d^2$. 
First, randomly select values of $s$ and $\ell$. Then select the $\alpha$ and $\alpha'$ for the two sample points. 
Finally, sample the defining value of $a$. Notice that the difference between the two final points 
does not depend on the choice of $a$. In fact, after making all other choices, the final distribution is within $O(W^{-1/3})$ 
of the uniform distribution over pairs of points in $[0,W]$ with this distance. 
Thus, $(x,y)$ is close distributionally to the distribution on pairs in $[0,W]$ with separation $x-y$. 
A similar statement holds for $(z,w)$ and points with separation $z-w$. Thus, $\dtv((x,y),(w,z))=\dtv(x-y,w-z)+O(W^{-1/3})$, as desired.

Next, we claim that $\dtv(x-y,z-w)=\dtv(|x-y|,|z-w|)$. 
This is easily seen to be the case by averaging over $b$. 
We have left to bound the latter distance. If $x$ and $y$ are chosen using $\alpha_x$ and $\alpha_y$, 
we have that $|x-y| = e^\ell |e^{\alpha_x}-e^{\alpha_y}|$. 
Similarly, if $z$ and $w$ are chosen using $\alpha_z$ and $\alpha_w$, 
we have that $|z-w| = e^\ell |e^{\alpha_z} + e^{\alpha_w}|$. 
Notice that if we fix $\alpha_x,\alpha_y,\alpha_z$ and $\alpha_w$, 
the variation distance between these two distributions (given the distributions over the values of $\ell$) is
$$
O\left(\frac{\left|\log\left(\frac{|e^{\alpha_x}-e^{\alpha_y}|}{|e^{\alpha_z}+e^{\alpha_w}|} \right) \right|}{\log(W)} \right).
$$
Therefore, the variation distance between $|x-y|$ and $|z-w|$ is $O(1/\log(W))$ 
times the earth mover distance between $\log(|e^{\alpha_x}-e^{\alpha_y}|)$ and $\log(|e^{\alpha_z}+e^{\alpha_w}|)$. 
Correlating these variables so that $\alpha_x=\alpha_z=\alpha$ and $\alpha_y=\alpha_w=\beta$, 
this is at most the expectation of $|\log(\tanh((\alpha-\beta)/2))|$, which can easily be seen to be $O(1/\log(W))$. 
This shows that $\dtv(\mathcal{E}^2,\mathcal{E}_0^2)=O(1/\log^2(W))$, completing our proof.
\end{proof}

We are now ready to prove the first part of Theorem \ref{thm:main-lb}.
\begin{proof}
The overall outline is very similar to the methods used in the last section. 
For sufficiently large integers $m,k,W$ and $\epsilon>0$ we are going to define families of pairs of pseudo-distributions 
$\mathcal{D}$ and $\mathcal{D'}$ on $[(k+2m)W]$ so that:
\begin{itemize}
\item With $90\%$ probability a random sample from either $\mathcal{D}$ or $\mathcal{D'}$ consists 
of two pseudo-distributions with total mass $\Theta(1)$.
\item The distributions picked by a sample from $\mathcal{D}$ are always the same.
\item The two distributions picked by a sample from $\mathcal{D'}$ have $\mathcal{A}_k$ distance $\Omega(\eps)$ with $90\%$ probability.
\item Letting $A$ be the outcome of a Poisson process with parameter $m$ run on a random sample from either $\mathcal{D}$ or $\mathcal{D'}$, the family used cannot be reliably determined from $A$ unless $m \gg k^{4/5}/\eps^{6/5}$ 
or $m\gg k^{2/3}\log^{4/3}(W)/\eps^{4/3}$.
\end{itemize}

Before we define $\mathcal{D}$ and $\mathcal{D'}$, we will need to define one more family. 
Firstly, let $\mathcal{E}$ and $\mathcal{E}_0$ be the families of distributions on $[W]$ 
from Lemma \ref{distanceHideLem}. Let $\mathcal{E}^2$ and $\mathcal{E}_0^2$ be as described in that lemma. 
We define another family, $\mathcal{F}$ of pairs of distributions on $[W]$ as follows. 
First select a point $(x,y)$ from the renormalized version of $|\mathcal{E}^2-\mathcal{E}_0^2|$. 
Then return the pair of distributions $p=q$ equals the uniform distribution over $\{x,y\}$.

To define $\mathcal{D}$ and $\mathcal{D'}$, we split $[(k+2m)W]$ into $k+2m$ blocks of size $W$. 
A sample from $\mathcal{D}$ assigns to each block independently the pseudo-distribution:
\begin{itemize}
\item $\mathcal{E}_0/m$ (i.e., a random sample from $\mathcal{E}_0$ scaled by a factor of $1/m$) with probability $m/(k+2m)$
\item $\mathcal{E}_0\epsilon/k$ with probability $k/(k+2m)$
\item $\mathcal{F}/m$ with probability $m/(k+2m)$.
\end{itemize}
A sample from $\mathcal{D'}$ assigns to each block independently the pseudo-distribution:
\begin{itemize}
\item $\mathcal{E}_0/m$ with probability $m/(k+2m)$
\item $\mathcal{E}\epsilon/k$ with probability $k/(k+2m)$
\item $\mathcal{F}/m$ with probability $m/(k+2m)$.
\end{itemize}

It is easy to see that $\mathcal{D}$ and $\mathcal{D'}$ satisfy the first three of the properties listed above. 
To demonstrate the fourth, let $X$ be a uniform Bernoulli random variable. 
Let $A$ be obtained by applying a Poisson process of parameter $m$ to a  sample from $\mathcal{D}$ if $X=0$, 
and to a sample from $\mathcal{D'}$ if $X=1$. We will show that $I(X:A)=o(1).$ 
Once again, letting $A=(A_1,A_2,\ldots,A_{k+2m})$, where $A_i$ are the samples taken from the $i^{th}$ block, 
we note that the $A_i$ are conditionally independent on $X$ and therefore, $I(X:A)\leq (k+2m) I(X:A_1)$.

As before, no information is gained when $|A_1|<2$, and the contribution when $|A_1|\geq 3$ is $O((k/(k+m))^2 (m \eps/k)^6/ (m/(k+m)))$, 
which leads to a total contribution of $o(1)$ when $m=o(k^{4/5}/\eps^{6/5})$. 
It remains to consider the contribution from events where $|A_1|=2$.

This is
$$
\sum_{x\in([W]\times \{p,q\})^2} O\left(\frac{(\Pr(A_1=x|X=0)-\Pr(A_1=x|X=1))^2}{\Pr(A_1=x)} \right).
$$
Note that the contribution to
$
\Pr(A_1=x|X=0)-\Pr(A_1=x|X=1)
$
from cases where $\mathcal{D}$ and $\mathcal{D'}$ 
on block $1$ are $\mathcal{E}_0/m$ or $\mathcal{F}/m$ cancel out. 
Therefore, we have that
\begin{align*}
|\Pr(A_1=x|X=0)-\Pr(A_1=x|X=1)| & = O((k/(k+m)) (m\eps/k)^2 |\mathcal{E}^2(x)-\mathcal{E}_0^2(x)|)\\ & = O((k/(k+m))(m\eps/k)^2 \mathcal{F}(x) / \log^2(W)).
\end{align*}
On the other hand, the $\Pr(A_1=x)$ is at least the probability that $A_1=x$ when the restriction to block $1$ is $\mathcal{F}/m$, 
which is $\Omega(m/(k+m)\mathcal{F}(x))$. Therefore, the contribution to $I(X:A_1)$ coming from events where $|A_1|=2$ is
\begin{align*}
&  \sum_{x\in([W]\times \{p,q\})^2} O\left(\frac{(\Pr(A_1=x|X=0)-\Pr(A_1=x|X=1))^2}{\Pr(A_1=x)} \right)\\
& = \sum_{x\in([W]\times \{p,q\})^2}O\left(\frac{((k/(k+m))(m\eps/k)^2 \mathcal{F}^2(x) / \log^2(W))^2}{m/(k+m)\mathcal{F}^2(x)} \right)\\
& = \sum_{x\in([W]\times \{p,q\})^2}\mathcal{F}^2(x) O(k^2 (m\eps/k)^4 \log^{-4}(W)/ (m(k+m)))\\
& = O(m^3 \eps^4 k^{-2} \log^{-4}(W) / (m+k)).
\end{align*}
Hence, the total contribution to $I(X:A)$ from such terms is $O(m^3 \eps^4 k^{-2} \log^{-4}(W) / (m+k)).$ This is $o(1)$ if $m=o(k^{2/3}\log^{4/3}(W)/\eps^{4/3}).$ This completes our proof.
\end{proof}

\bibliographystyle{alpha}
\bibliography{allrefs}

\newcommand{\etalchar}[1]{$^{#1}$}
\begin{thebibliography}{CDSS14b}

\bibitem[ADH{\etalchar{+}}15]{ADHLS15}
J.~Acharya, I.~Diakonikolas, C.~Hegde, J.~Li, and L.~Schmidt.
\newblock {Fast and Near-Optimal Algorithms for Approximating Distributions by
  Histograms}.
\newblock In {\em 34th {ACM} {SIGMOD-SIGACT-SIGAI} Symposium on Principles of
  Database Systems, PODS 2015}, pages 249--263, 2015.

\bibitem[ADLS16]{AcharyaDLS16}
J.~Acharya, I.~Diakonikolas, J.~Li, and L.~Schmidt.
\newblock Fast algorithms for segmented regression.
\newblock In {\em Proceedings of the 33nd International Conference on Machine
  Learning, {ICML} 2016}, pages 2878--2886, 2016.

\bibitem[ADLS17]{ADLS17}
J.~Acharya, I.~Diakonikolas, J.~Li, and L.~Schmidt.
\newblock Sample-optimal density estimation in nearly-linear time.
\newblock In {\em Proceedings of the Twenty-Eighth Annual {ACM-SIAM} Symposium
  on Discrete Algorithms, {SODA} 2017}, pages 1278--1289, 2017.
\newblock Full version available at https://arxiv.org/abs/1506.00671.

\bibitem[BBBB72]{BBBB:72}
R.E. Barlow, D.J. Bartholomew, J.M. Bremner, and H.D. Brunk.
\newblock {\em Statistical Inference under Order Restrictions}.
\newblock Wiley, New York, 1972.

\bibitem[BFR{\etalchar{+}}00]{BFR+:00}
T.~Batu, L.~Fortnow, R.~Rubinfeld, W.~D. Smith, and P.~White.
\newblock Testing that distributions are close.
\newblock In {\em {IEEE} Symposium on Foundations of Computer Science}, pages
  259--269, 2000.

\bibitem[BKR04]{BKR:04}
T.~Batu, R.~Kumar, and R.~Rubinfeld.
\newblock Sublinear algorithms for testing monotone and unimodal distributions.
\newblock In {\em {ACM} Symposium on Theory of Computing}, pages 381--390,
  2004.

\bibitem[Can15]{Canonne15}
C.~L. Canonne.
\newblock A survey on distribution testing: Your data is big. but is it blue?
\newblock {\em Electronic Colloquium on Computational Complexity {(ECCC)}},
  22:63, 2015.

\bibitem[Can16]{Canonne16}
C.~L. Canonne.
\newblock Are few bins enough: Testing histogram distributions.
\newblock In {\em Proceedings of the 35th {ACM} {SIGMOD-SIGACT-SIGAI} Symposium
  on Principles of Database Systems, {PODS} 2016}, pages 455--463, 2016.

\bibitem[CDGR16]{CDGR16}
C.~L. Canonne, I.~Diakonikolas, T.~Gouleakis, and R.~Rubinfeld.
\newblock Testing shape restrictions of discrete distributions.
\newblock In {\em 33rd Symposium on Theoretical Aspects of Computer Science,
  {STACS} 2016}, pages 25:1--25:14, 2016.

\bibitem[CDSS13]{CDSS13}
S.~Chan, I.~Diakonikolas, R.~Servedio, and X.~Sun.
\newblock Learning mixtures of structured distributions over discrete domains.
\newblock In {\em SODA}, pages 1380--1394, 2013.

\bibitem[CDSS14a]{CDSS14}
S.~Chan, I.~Diakonikolas, R.~Servedio, and X.~Sun.
\newblock Efficient density estimation via piecewise polynomial approximation.
\newblock In {\em STOC}, pages 604--613, 2014.

\bibitem[CDSS14b]{CDSS14b}
S.~Chan, I.~Diakonikolas, R.~Servedio, and X.~Sun.
\newblock Near-optimal density estimation in near-linear time using
  variable-width histograms.
\newblock In {\em NIPS}, pages 1844--1852, 2014.

\bibitem[CDVV14]{CDVV14}
S.~Chan, I.~Diakonikolas, P.~Valiant, and G.~Valiant.
\newblock Optimal algorithms for testing closeness of discrete distributions.
\newblock In {\em SODA}, pages 1193--1203, 2014.

\bibitem[CMN98]{CMN98}
S.~Chaudhuri, R.~Motwani, and V.~R. Narasayya.
\newblock Random sampling for histogram construction: How much is enough?
\newblock In {\em SIGMOD Conference}, pages 436--447, 1998.

\bibitem[DDKT16]{DDKT15}
C.~Daskalakis, A.~De, G.~Kamath, and C.~Tzamos.
\newblock A size-free {CLT} for poisson multinomials and its applications.
\newblock In {\em Proceedings of the 48th Annual ACM Symposium on the Theory of
  Computing}, STOC '16, New York, NY, USA, 2016. ACM.

\bibitem[DDO{\etalchar{+}}13]{DDOST13focs}
C.~Daskalakis, I.~Diakonikolas, R.~O'Donnell, R.A. Servedio, and L.~Tan.
\newblock {Learning Sums of Independent Integer Random Variables}.
\newblock In {\em FOCS}, pages 217--226, 2013.

\bibitem[DDS12a]{DDS12soda}
C.~Daskalakis, I.~Diakonikolas, and R.A. Servedio.
\newblock Learning $k$-modal distributions via testing.
\newblock In {\em SODA}, pages 1371--1385, 2012.

\bibitem[DDS12b]{DDS12stoc}
C.~Daskalakis, I.~Diakonikolas, and R.A. Servedio.
\newblock {Learning Poisson Binomial Distributions}.
\newblock In {\em STOC}, pages 709--728, 2012.

\bibitem[DDS{\etalchar{+}}13]{DDSVV13}
C.~Daskalakis, I.~Diakonikolas, R.~Servedio, G.~Valiant, and P.~Valiant.
\newblock Testing $k$-modal distributions: Optimal algorithms via reductions.
\newblock In {\em SODA}, pages 1833--1852, 2013.

\bibitem[DGPP16]{DiakonikolasGPP16}
I.~Diakonikolas, T.~Gouleakis, J.~Peebles, and E.~Price.
\newblock Collision-based testers are optimal for uniformity and closeness.
\newblock {\em Electronic Colloquium on Computational Complexity {(ECCC)}},
  23:178, 2016.

\bibitem[DHS15]{DHS15}
I.~Diakonikolas, M.~Hardt, and L.~Schmidt.
\newblock Differentially private learning of structured discrete distributions.
\newblock In {\em NIPS}, pages 2566--2574, 2015.

\bibitem[DK16]{DK16}
I.~Diakonikolas and D.~M. Kane.
\newblock A new approach for testing properties of discrete distributions.
\newblock In {\em FOCS}, pages 685--694, 2016.
\newblock Full version available at abs/1601.05557.

\bibitem[DKN15a]{DKN:15:FOCS}
I.~Diakonikolas, D.~M. Kane, and V.~Nikishkin.
\newblock Optimal algorithms and lower bounds for testing closeness of
  structured distributions.
\newblock In {\em {IEEE} 56th Annual Symposium on Foundations of Computer
  Science, {FOCS} 2015}, pages 1183--1202, 2015.

\bibitem[DKN15b]{DKN:15}
I.~Diakonikolas, D.~M. Kane, and V.~Nikishkin.
\newblock Testing identity of structured distributions.
\newblock In {\em Proceedings of the Twenty-Sixth Annual {ACM-SIAM} Symposium
  on Discrete Algorithms, {SODA} 2015}, pages 1841--1854, 2015.

\bibitem[DKS16a]{DiakonikolasKS16a}
I.~Diakonikolas, D.~M. Kane, and A.~Stewart.
\newblock Efficient robust proper learning of log-concave distributions.
\newblock {\em CoRR}, abs/1606.03077, 2016.

\bibitem[DKS16b]{DKS16}
I.~Diakonikolas, D.~M. Kane, and A.~Stewart.
\newblock The fourier transform of poisson multinomial distributions and its
  algorithmic applications.
\newblock In {\em Proceedings of STOC'16}, 2016.

\bibitem[DKS16c]{DKS16lcd}
I.~Diakonikolas, D.~M. Kane, and A.~Stewart.
\newblock Learning multivariate log-concave distributions.
\newblock {\em CoRR}, abs/1605.08188, 2016.

\bibitem[DKS16d]{DKS15a}
I.~Diakonikolas, D.~M. Kane, and A.~Stewart.
\newblock {Optimal Learning via the Fourier Transform for Sums of Independent
  Integer Random Variables}.
\newblock In {\em {COLT}}, volume~49, pages 831--849, 2016.
\newblock Full version available at
  \href{https://arxiv.org/abs/1505.00662}{arXiv:1505.00662}.

\bibitem[DKS16e]{DKS15b}
I.~Diakonikolas, D.~M. Kane, and A.~Stewart.
\newblock Properly learning poisson binomial distributions in almost polynomial
  time.
\newblock In {\em Proceedings of the 29th Conference on Learning Theory, {COLT}
  2016}, pages 850--878, 2016.
\newblock Full version available at
  \href{https://arxiv.org/abs/1511.04066}{arXiv:1511.04066}.

\bibitem[DL01]{DL:01}
L.~Devroye and G.~Lugosi.
\newblock {\em Combinatorial methods in density estimation}.
\newblock Springer Series in Statistics, Springer, 2001.

\bibitem[DL04]{Devroye2004}
L.~Devroye and G.~Lugosi.
\newblock Bin width selection in multivariate histograms by the combinatorial
  method.
\newblock {\em Test}, 13(1):129--145, 2004.

\bibitem[FD81]{FreedmanD1981}
D.~Freedman and P.~Diaconis.
\newblock On the histogram as a density estimator:l2 theory.
\newblock {\em Zeitschrift f{\"u}r Wahrscheinlichkeitstheorie und Verwandte
  Gebiete}, 57(4):453--476, 1981.

\bibitem[GGI{\etalchar{+}}02]{GGI+02}
A.~C. Gilbert, S.~Guha, P.~Indyk, Y.~Kotidis, S.~Muthukrishnan, and M.~Strauss.
\newblock Fast, small-space algorithms for approximate histogram maintenance.
\newblock In {\em STOC}, pages 389--398, 2002.

\bibitem[GJ14]{GJ:14}
P.~Groeneboom and G.~Jongbloed.
\newblock {\em Nonparametric Estimation under Shape Constraints: Estimators,
  Algorithms and Asymptotics}.
\newblock Cambridge University Press, 2014.

\bibitem[GKS06]{GKS06}
S.~Guha, N.~Koudas, and K.~Shim.
\newblock Approximation and streaming algorithms for histogram construction
  problems.
\newblock {\em ACM Trans. Database Syst.}, 31(1):396--438, 2006.

\bibitem[ILR12]{ILR12}
P.~Indyk, R.~Levi, and R.~Rubinfeld.
\newblock {Approximating and Testing $k$-Histogram Distributions in Sub-linear
  Time}.
\newblock In {\em PODS}, pages 15--22, 2012.

\bibitem[JKM{\etalchar{+}}98]{JPK+98}
H.~V. Jagadish, N.~Koudas, S.~Muthukrishnan, V.~Poosala, K.~C. Sevcik, and
  T.~Suel.
\newblock Optimal histograms with quality guarantees.
\newblock In {\em VLDB}, pages 275--286, 1998.

\bibitem[Kle09]{Klem09}
J.~Klemela.
\newblock Multivariate histograms with data-dependent partitions.
\newblock {\em Statistica Sinica}, 19(1):159--176, 2009.

\bibitem[LN96]{LN96}
G.~Lugosi and A.~Nobel.
\newblock Consistency of data-driven histogram methods for density estimation
  and classification.
\newblock {\em Ann. Statist.}, 24(2):687--706, 04 1996.

\bibitem[LR05]{lehmann2005testing}
E.~L. Lehmann and J.~P. Romano.
\newblock {\em Testing statistical hypotheses}.
\newblock Springer Texts in Statistics. Springer, 2005.

\bibitem[NP33]{NeymanP}
J.~Neyman and E.~S. Pearson.
\newblock On the problem of the most efficient tests of statistical hypotheses.
\newblock {\em Philosophical Transactions of the Royal Society of London.
  Series A, Containing Papers of a Mathematical or Physical Character},
  231(694-706):289--337, 1933.

\bibitem[Pan08]{Paninski:08}
L.~Paninski.
\newblock A coincidence-based test for uniformity given very sparsely-sampled
  discrete data.
\newblock {\em IEEE Transactions on Information Theory}, 54:4750--4755, 2008.

\bibitem[Rub12]{Rub12}
R.~Rubinfeld.
\newblock Taming big probability distributions.
\newblock {\em XRDS}, 19(1):24--28, 2012.

\bibitem[Sco79]{Scott79}
D.~W. Scott.
\newblock On optimal and data-based histograms.
\newblock {\em Biometrika}, 66(3):605--610, 1979.

\bibitem[Sco92]{Scott:92}
D.W. Scott.
\newblock {\em Multivariate Density Estimation: Theory, Practice and
  Visualization}.
\newblock Wiley, New York, 1992.

\bibitem[TGIK02]{TGIK02}
N.~Thaper, S.~Guha, P.~Indyk, and N.~Koudas.
\newblock Dynamic multidimensional histograms.
\newblock In {\em SIGMOD Conference}, pages 428--439, 2002.

\bibitem[VV14]{VV14}
G.~Valiant and P.~Valiant.
\newblock An automatic inequality prover and instance optimal identity testing.
\newblock In {\em FOCS}, 2014.

\bibitem[WN07]{WillettN07}
R.~Willett and R.~D. Nowak.
\newblock Multiscale poisson intensity and density estimation.
\newblock {\em {IEEE} Transactions on Information Theory}, 53(9):3171--3187,
  2007.

\end{thebibliography}

\end{document}